\setlist[enumerate]{topsep=0pt,itemsep=0ex,partopsep=1ex,parsep=1ex}
\newcommand\vldbdoi{XX.XX/XXX.XX}
\newcommand\vldbpages{XXX-XXX}
\newcommand\vldbvolume{14}
\newcommand\vldbissue{1}
\newcommand\vldbyear{2020}
\newcommand\vldbauthors{\authors}
\newcommand\vldbtitle{\shorttitle} 
\newcommand\vldbavailabilityurl{https://github.com/qyliu-hkust/bench_search}
\newcommand\vldbpagestyle{plain} 
\begin{document}
\title{Why Are Learned Indexes So Effective but Sometimes Ineffective?}

%%
%% The "author" command and its associated commands are used to define the authors and their affiliations.
\author{Qiyu Liu}
\affiliation{%
  \institution{Southwest University}
}
\email{qyliu.cs@gmail.com}
\author{Siyuan Han}
\affiliation{
  \institution{HKUST}
}
\email{shanaj@connect.ust.hk}
\author{Yanlin Qi}
\affiliation{
  \institution{HIT Shenzhen}
}
\email{yanlinqi7@gmail.com}
\author{Jingshu Peng}
\affiliation{%
  \institution{ByteDance}
}
\email{jingshu.peng@bytedance.com}
\author{Jin Li}
\affiliation{%
  % \institution{Harvard \& MegaETH}
  \institution{Harvard University}
}
\email{jinli@g.harvard.edu}
\author{Longlong Lin}
\affiliation{%
  \institution{Southwest University}
}
\email{longlonglin@swu.edu.cn}
\author{Lei Chen}
\affiliation{%
  \institution{HKUST \& HKUST (GZ)}
}
\email{leichen@cse.ust.hk}

% \author[1]{Qiyu Liu}
% \author[2]{Siyuan Han}
% \author[2]{Jingshu Peng}
% \author[2,3]{Lei Chen}
% \affil[1]{Southwest University}
% \affil[2]{HKUST}
% \affil[3]{HKUST (GZ)}

%%
%% The abstract is a short summary of the work to be presented in the
%% article.
\begin{abstract}
Learned indexes have attracted significant research interest due to their ability to offer better space-time trade-offs compared to traditional B+-tree variants. 
Among various learned indexes, the PGM-Index based on error-bounded piecewise linear approximation is an elegant data structure that has demonstrated \emph{provably} superior performance over conventional B+-tree indexes. 
In this paper, we explore two interesting research questions regarding the PGM-Index: 
\ding{182} \emph{Why are PGM-Indexes theoretically effective?} 
and \ding{183} \emph{Why do PGM-Indexes underperform in practice?} 
For question~\ding{182}, we first prove that, for a set of $N$ sorted keys, the PGM-Index can, with high probability, achieve a lookup time of $O(\log\log N)$ while using $O(N)$ space. 
To the best of our knowledge, this is the \textbf{tightest bound} for learned indexes to date. 
For question~\ding{183}, we identify that querying PGM-Indexes is highly memory-bound, where the internal error-bounded search operations often become the bottleneck. 
To fill the performance gap, we propose PGM++, a \emph{simple yet effective} extension to the original PGM-Index that employs a mixture of different search strategies, with hyper-parameters automatically tuned through a calibrated cost model. 
Extensive experiments on real workloads demonstrate that PGM++ establishes a new Pareto frontier. 
At comparable space costs, PGM++ speeds up index lookup queries by up to $\mathbf{2.31\times}$ and $\mathbf{1.56\times}$ when compared to the original PGM-Index and state-of-the-art learned indexes. 
\end{abstract}

\maketitle

%%% do not modify the following VLDB block %%
%%% VLDB block start %%%
\pagestyle{\vldbpagestyle}
\begingroup\small\noindent\raggedright\textbf{PVLDB Reference Format:}\\
\vldbauthors. \vldbtitle. PVLDB, \vldbvolume(\vldbissue): \vldbpages, \vldbyear.\\
\href{https://doi.org/\vldbdoi}{doi:\vldbdoi}
\endgroup
\begingroup
\renewcommand\thefootnote{}\footnote{\noindent
This work is licensed under the Creative Commons BY-NC-ND 4.0 International License. Visit \url{https://creativecommons.org/licenses/by-nc-nd/4.0/} to view a copy of this license. For any use beyond those covered by this license, obtain permission by emailing \href{mailto:info@vldb.org}{info@vldb.org}. Copyright is held by the owner/author(s). Publication rights licensed to the VLDB Endowment. \\
\raggedright Proceedings of the VLDB Endowment, Vol. \vldbvolume, No. \vldbissue\ %
ISSN 2150-8097. \\
\href{https://doi.org/\vldbdoi}{doi:\vldbdoi} \\
}\addtocounter{footnote}{-1}\endgroup
%%% VLDB block end %%%

%%% do not modify the following VLDB block %%
%%% VLDB block start %%%
\ifdefempty{\vldbavailabilityurl}{}{
\vspace{.3cm}
\begingroup\small\noindent\raggedright\textbf{PVLDB Artifact Availability:}\\
The source code, data, and/or other artifacts have been made available at \url{\vldbavailabilityurl}.
\endgroup
}
%%% VLDB block end %%%

\section{Introduction}\label{sec:introduction}
Indexes are fundamental components of DBMS and big data engines to enable real-time analytics~\cite{postgresql,spark}. 
An emerging research tendency is to directly learn the storage layout of sorted data by using simple machine learning (ML) models, leading to the concept of \emph{Learned Index}~\cite{DBLP:conf/sigmod/KraskaBCDP18, DBLP:journals/pvldb/FerraginaV20,DBLP:conf/sigmod/DingMYWDLZCGKLK20,DBLP:journals/pvldb/WuZCCWX21,DBLP:journals/pvldb/ZhangG22,zhang2024making}. 
Compared to traditional indexes like B+-tree variants~\cite{DBLP:journals/csur/Comer79,DBLP:conf/icde/LevandoskiLS13a,DBLP:conf/sigmod/KimCSSNKLBD10}, learned indexes have been shown to reduce the memory footprint by 2--3 orders of magnitude while achieving comparable index lookup performance. 

Similar to B+-trees or other binary search tree (BST) variants, learned indexes address the classical problem of \emph{Sorted Dictionary Indexing}~\cite{cormen2022introduction}. 
Given an array of $N$ sorted keys $\mathcal{K}=\{k_1,\cdots,k_N\}$, the objective of learned indexes is to find a projection function (i.e., an ML model) $f(k)\in\mathbb{N}^+$ that maps an arbitrary query key $k$ to its corresponding index in the sorted array $\mathcal{K}$ (i.e., its position on storage). 
However, ML models inherently produce prediction errors. 
As illustrated in Figure~\ref{fig:btree_vs_learned}, the maximum prediction error over $\mathcal{K}$ is denoted by $\epsilon$. 
To ensure the correctness of an index lookup query for a search key $k$, an exact ``last-mile'' search, typically a standard binary search, must be performed within the error range (i.e., $[f(k)-\epsilon, f(k)+\epsilon]$). 
% To strike a balance between the fitting capability and model complexity, instead of choosing intricate deep learning (DL) models, learned indexes like RMI~\cite{DBLP:conf/sigmod/KraskaBCDP18} or PGM-Index~\cite{DBLP:journals/pvldb/FerraginaV20} opt to stack up simple models, such as linear models or polynomial splines, in a hierarchical structure. 
To balance model accuracy with complexity, learned indexes such as Recursive Model Index (RMI)~\cite{DBLP:conf/sigmod/KraskaBCDP18} and PGM-Index~\cite{DBLP:journals/pvldb/FerraginaV20} opt to stack simple models, such as linear models or polynomial splines, in a hierarchical structure, thereby achieving a balance between the model complexity and fitting accuracy. 

\begin{figure}[t]
    \centering
    \includegraphics[width=0.37\textwidth]{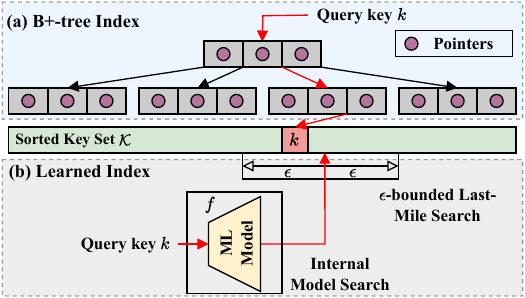}
    \caption{(a) A conventional B+-tree index. (b) A learned index with a ``last-mile'' maximum search error $\epsilon$. 
    % The position predicted by a learned index is guaranteed to be within the range $\mathsf{rank}(k)\pm\epsilon$ where $\mathsf{rank}(k)$ is the true index of the search key $k$.
    }
    \label{fig:btree_vs_learned}
    \vspace{2pt}
\end{figure}

\begin{figure*}[t]
    \centering
    \includegraphics[width=0.85\textwidth]{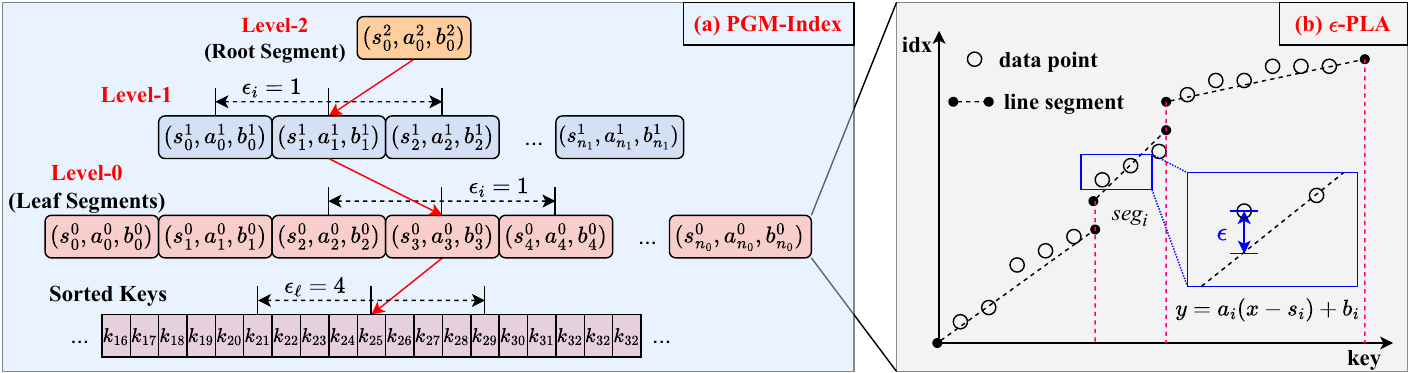}
    \caption{A toy example of a 3-level PGM-Index with $\epsilon_i=1$ (i.e., internal search error range) and $\epsilon_\ell=4$ (i.e., last-mile search error range). Processing a lookup query on such PGM-Index involves in total three linear function evaluations, two internal search operations in the range $2\cdot\epsilon_i+1$, and one ``last-mile'' search operation on the sorted data array in the range $2\cdot\epsilon_\ell+1$. }
    \label{fig:pgm}
    \vspace{-1.5ex}
\end{figure*}

Among the various published learned indexes~\cite{DBLP:conf/sigmod/KraskaBCDP18, DBLP:journals/pvldb/FerraginaV20,DBLP:conf/sigmod/DingMYWDLZCGKLK20,DBLP:journals/pvldb/WuZCCWX21,DBLP:journals/pvldb/ZhangG22,zhang2024making}, the PGM-Index~\cite{DBLP:journals/pvldb/FerraginaV20} stands out as a simple yet elegant structure that has been proven to be \emph{theoretically} more efficient than a B+-tree. 
As depicted in Figure~\ref{fig:pgm}, the PGM-Index is a multi-level structure constructed by recursively fitting \emph{error-bounded piecewise linear approximation} models ($\epsilon$-PLA). 
Searching in a PGM-Index is performed through a sequence of \emph{error-bounded search} operations in a top-down manner. 
Recent theoretical analysis~\cite{DBLP:conf/icml/FerraginaLV20} indicates that, compared to a B+-tree with fanout $B$, the PGM-Index, however, can reduce memory footprint by a factor of $B$, while preserving the same logarithmic index lookup complexity (i.e., $O(\log N)$). 

Intuitively, the PGM-Index is structured as a hierarchy of line segments, where the index height is a key factor in determining the lookup time complexity. 
Existing results~\cite{DBLP:conf/icml/FerraginaLV20,DBLP:journals/pvldb/FerraginaV20} suggest that the height of a PGM-Index built on $N$ sorted keys should be $O(\log N)$. 
However, our empirical investigations reveal that PGM-Indexes are highly \emph{flat}, with over \textbf{99\%} of the total index space cost attributed to the segments at the \emph{bottom} level. 
This observation implies that the height of the PGM-Index grows more slowly than $O(\log N)$, potentially at a sub-logarithmic rate. 
Motivated by this, we pose the following research question. 

% However, after empirically investigating the memory footprint of PGM-Index structures, we find out that PGM-Index is highly \emph{flat} where the segments in the bottom level occupy \textbf{>99\%} of the total index space cost. 
% This implies that the height growth of the PGM-Index should be even slower than $O(\log N)$, i.e., sub-logarithm. 

\noindent\textbf{{Q1: Why Are PGM-Indexes So Effective in Theory?}}
To answer this question, we establish new theoretical results for PGM-Indexes. 
With high probability (w.h.p.), the index lookup time can be bounded by $O(\log_2\log_G N)=O(\log\log N)$ using linear space of $O(N/G)$, where $G$ is a constant determined by data distribution characteristics and the error constraint $\epsilon$. 
To the best of our knowledge, this work presents the \emph{tightest} bound for learned index structures compared to existing theoretical analyses~\cite{DBLP:conf/icml/FerraginaLV20,DBLP:conf/icml/ZeighamiS23}. 

Interestingly, BSTs can be viewed as a ``materialized'' version of the binary search algorithm, whose time complexity is $O(\log N)$. 
As an analog, the PGM-Index with piecewise linear approximation models can be regarded as a ``materialized'' version of the interpolation search algorithm, whose time complexity is $O(\log\log N)$~\cite{DBLP:journals/cacm/PerlIA78,DBLP:conf/sigmod/SandtCP19}, aligning with our theoretical findings. 

Despite its theoretical superiority, recent benchmarks~\cite{DBLP:journals/pvldb/MarcusKRSMK0K20,DBLP:journals/pvldb/WongkhamLLZLW22} show that the PGM-Index falls short of practical performance expectations, often underperforming compared to well-optimized RMI variants~\cite{DBLP:conf/sigmod/KraskaBCDP18,DBLP:conf/sigmod/KipfMRSKK020}. 
This leads to our second research question.

\noindent\textbf{{Q2: Why Are PGM-Indexes Ineffective in Practice?}}
Our investigation into extensive benchmark results across various hardware platforms reveals that PGM-Indexes are memory-bound. 
% Through investigating extensive benchmark results across various hardware platforms, we find that searching on PGM-Indexes is highly \emph{memory-bounded}. 
The internal error-bounded search operation, often implemented as a standard binary search (e.g., \texttt{std::lower\_bound} in C++), becomes a bottleneck when processing an index lookup query. 
According to our benchmark (Section~\ref{sec:ineffective}), less than \textbf{1\%} of the internal segments account for over \textbf{80\%} of the total index lookup time. 

To improve search efficiency, we propose a hybrid internal search strategy that combines the advantages of linear search and highly optimized branchless binary search by properly setting search range thresholds. 
Additionally, as illustrated in Figure~\ref{fig:pgm}, constructing a PGM-Index necessitates two hyper-parameters $\epsilon_i$ and $\epsilon_\ell$, the error thresholds for internal index traversal and last-mile search on the data array, respectively. 
We find that the $\epsilon_\ell$ primarily controls the overall index size, while both $\epsilon_i$ and $\epsilon_\ell$ influence the index lookup efficiency. 
Based on theoretical analysis and experimental observations, we develop a cost model that is finely calibrated using benchmark data. 
Leveraging this cost model, we further introduce an automatic hyper-parameter tuning strategy to better balance index lookup efficiency with index size. 

In summary, our technical contributions are as follows. 
\ding{182} \textbf{New Bound.} We prove the sub-logarithmic index lookup time of the PGM-Index (i.e., $O(\log\log N)$). 
This result tightens the previous logarithmic bound on the PGM-Index and further validates its provable performance superiority compared to conventional tree-based indexes. 
\ding{183} \textbf{Simple Methods.} We introduce PGM++, a \emph{simple yet effective} improvement to the PGM-Index by replacing the costly internal search operations. 
We further devise an automatic parameter tuner for PGM++, guided by an accurate cost model. 
\ding{184} \textbf{New Pareto Frontier.} Extensive experimental studies on real and synthetic data show that, with a comparable index memory footprint, PGM++ robustly outperforms the original PGM-Index and optimized RMI variants~\cite{DBLP:journals/pvldb/MarcusKRSMK0K20,DBLP:journals/pvldb/ZhangG22} by up to \textbf{2.31\texttimes} and \textbf{1.56\texttimes}, respectively. 
For example, even on a resource-constrained device like MacBook Air 2024~\cite{macbook2024}, PGM++ achieves index lookup time of \textbf{<400 ns} on \textbf{800 million} keys, using only \textbf{0.28 MB} of memory. 

The remainder of this paper is structured as follows. 
Section~\ref{sec:preliminaries} introduces the basis of learned indexes, followed by the micro-benchmark setup details in Section~\ref{sec:benchmark_setting}. 
% We introduce the basis of learned indexes and the micro-benchmark setup details in Section~\ref{sec:preliminaries} and Section~\ref{sec:benchmark_setting}, respectively. 
Section~\ref{sec:theory} presents our core theoretical analysis of the PMG-Index. 
Section~\ref{sec:ineffective} explores the reasons behind the PGM-Index's underperformance in practice. 
In Section~\ref{sec:optimization}, we introduce PGM++, an optimized PGM-Index variant featuring hybrid error-bounded search and automatic hyper-parameter tuning. 
Section~\ref{sec:exp} reports the experimental results. 
Section~\ref{sec:related_works} surveys and discusses related works, and finally, Section~\ref{sec:conclusion} concludes the paper and discusses future studies. 

\section{Preliminaries}\label{sec:preliminaries}
We first overview the basis of learned indexes ($\vartriangleright$ Section~\ref{subsec:pgm}) and then elaborate on the details of existing theoretical results ($\vartriangleright$ Section~\ref{subsec:exisiting_theory}). 
Table~\ref{tab:notations} summarizes the major notations. 

\begin{table}[t]
    \centering
    \caption{Summary of major notations.}
    \label{tab:notations}
    \small
    \begin{tabular}{cc}
    \toprule
        \cellcolor[HTML]{f4f5f6}\textbf{Notation} & \cellcolor[HTML]{f4f5f6}\textbf{Description} \\\midrule
        $\mathcal{K}$ & a set of $N$ sorted keys \\
        $\mathsf{rank}(k)$ & the sorting index of a key $k$ in $\mathcal{K}$\\
        $\epsilon_i$ & the internal search error parameter of PGM-Index\\
        $\epsilon_\ell$ & the last-mile search error parameter of PGM-Index\\
        $g_i$ & the difference between $k_i$ and $k_{i-1}$ (a.k.a.~gap)\\
        $\mu,\sigma^2$ & the mean and variance of gap distribution\\
        $(s,a,b)$ & a line segment $\ell(x)=a\cdot(x-s)+b$\\
        $H_{PGM}$ & the height of a PGM-Index\\
    \bottomrule
    \end{tabular}
\end{table}

\vspace{-1ex}
\subsection{Learned Index}\label{subsec:pgm}
Given a set of $N$ \emph{sorted} keys $\mathcal{K}=\{k_1,k_2,\cdots,k_N\}$ and an index set $\mathcal{I}=\{1,2,\cdots,N\}$, 
the goal of learned indexes is to find a mapping function $f(k)\in\mathbb{N}^+$ such that $f$ can project a search key $k\in\mathcal{K}$ to its corresponding index $\mathsf{rank}(k)\in\mathcal{I}$ with controllable error. 
Intuitively, learning $f$ is equivalent to learning a cumulative distribution function (CDF) scaled by the data size $N$. 
The model selection considerations for $f$ are threefold: 

\noindent\ding{182} \textbf{Compactness}: the model $f$ should be compact to reduce memory footprint, and model inference using $f$ must not introduce significant computational overhead;

\noindent\ding{183} \textbf{Error-Boundness}: the model $f$ should be error-bounded, ensuring that an exact last-mile search can correct prediction errors, i.e., $|f(k)-\mathsf{rank}(k)|\leq\epsilon$ for $\forall k\in\mathcal{K}$; 

\noindent\ding{184} \textbf{Monotonicity}: to ensure the correctness of querying keys outside $\mathcal{K}$, $f(k_1)\leq f(k_2)$ should hold for any $k_1\leq k_2$. 

Since running deep learning (DL) models usually require a heavy runtime like PyTorch~\cite{pytorch} or TensorFlow~\cite{tensorflow} that are costly and less flexible, existing learned index designs favor \emph{stacking} simple models, such as linear functions~\cite{DBLP:conf/sigmod/DingMYWDLZCGKLK20,DBLP:journals/pvldb/WuZCCWX21,DBLP:journals/pvldb/FerraginaV20}, polynomial splines~\cite{DBLP:conf/sigmod/KraskaBCDP18}, and radix splines~\cite{DBLP:conf/sigmod/KipfMRSKK020}. 
Among these learned index structures, the PGM-Index~\cite{DBLP:journals/pvldb/FerraginaV20} employs the error-bounded piecewise linear approximation ($\epsilon$-PLA) to strike a balance between the model complexity and prediction accuracy, which is defined as follows. 

\begin{definition}[$\epsilon$-PLA]
     Given a univariate set $\mathcal{X}=\{x_1,\cdots, x_N\}$, a corresponding target set $\mathcal{Y}=\{y_1,\cdots,y_N\}$, and an error constraint $\epsilon$,
     an $\epsilon$-PLA on the point set in Cartesian space $(\mathcal{X}, \mathcal{Y})=\{(x_i,y_i)\}_{i=1,\cdots,N}$ is defined as,
     \begin{equation}\label{eq:pla}
         f(x)=\begin{cases}
             a_1\cdot (x-s_1)+b_1&\text{ if } s_1\leq x< s_2\\
             a_2\cdot (x-s_2)+b_2&\text{ if } s_2\leq x<s_3\\
             \quad\cdots&\qquad\cdots\\
             a_{m}\cdot (x-s_m)+b_{m}&\text{ if } s_{m}\leq x<+\infty\\
         \end{cases}
     \end{equation}
     such that for $\forall i=1,2,\cdots,N$, it always holds that $|f(x_i)-y_i|\leq\epsilon$. 
\end{definition}

% Figure~\ref{fig:pgm}(b) exemplifies an $\epsilon$-PLA. 
The $i$-th segment in Eq.~\eqref{eq:pla} can be expressed by a tuple $seg_i=(s_i, a_i, b_i)$ where $s_i$ is the segment starting point, $a_i$ is the slope, and $b_i$ is the intercept. 
To ensure the monotonic requirement, the segments in Eq.~\eqref{eq:pla} should satisfy two conditions: (a) $a_i\geq 0$ for $i=1,\cdots,m$, and (b) $s_i<s_j$ for $\forall 1\leq i<j\leq m$. 
We then extend the original PGM-Index definition~\cite{DBLP:journals/pvldb/FerraginaV20} by separating the error parameters for internal search and last-mile search. 

\begin{definition}[$(\epsilon_i, \epsilon_\ell)$-PGM-Index~\cite{DBLP:journals/pvldb/FerraginaV20}]\label{def:pgm_construction}
    Given a sorted key set $\mathcal{K}=\{k_1,k_2,\cdots,k_N\}$ and two error parameters $\epsilon_i$ and $\epsilon_\ell$ ($\epsilon_i, \epsilon_\ell\in\mathbb{N}^+$), an $(\epsilon_i, \epsilon_\ell)$-PGM-Index is a multi-level structure where the bottom level (a.k.a., the leaf level or level-0) is an $\epsilon_\ell$-PLA and the remaining levels (a.k.a., internal levels) are $\epsilon_i$-PLA(s). 
    The structure can be constructed in a \emph{bottom-up} manner:
    
    \noindent\ding{182} \textbf{Leaf Level}: an $\epsilon_\ell$-PLA constructed on $(\mathcal{K}, \mathcal{I}=\{1,\cdots,N\})$. 

    \noindent\ding{183} \textbf{Internal Levels}: for the $j$-th level ($j\geq 1$), let $\mathcal{S}_{j-1}$ denote the set of segments in the $(j-1)$-th level (i.e., the previous level), and let $\mathcal{K}_{j-1}=\{seg.s\mid seg \in \mathcal{S}_{j-1}\}$ and $\mathcal{I}_{j-1}=\{1,2,\cdots,|\mathcal{K}_{j-1}|\}$. 
    Then, the $j$-th level is an $\epsilon_i$-PLA constructed on dataset $(\mathcal{K}_{j-1}, \mathcal{I}_{j-1})$. 

    \noindent\ding{184} \textbf{Root Level}: the internal level consisting of a \emph{single} line segment. 
\end{definition}

Specifically, we denote the segments in the bottom level as \textit{\underline{l}eaf segments} and the remaining segments as \textit{\underline{i}nternal segments} (corresponding to the subscripts of $\epsilon_\ell$ and $\epsilon_i$, respectively). 
The following example illustrates the PGM-Index lookup query processing.

\begin{example}[PGM-Index Lookup]\label{eg:pgm_query}
    Figure~\ref{fig:pgm} illustrates a 3-level PGM-Index with $\epsilon_i=1$ and $\epsilon_\ell=4$. 
    Given a query key $k$, an index lookup query is performed in a \emph{top-down} manner from the root level to the bottom level as follows:
    
    \noindent\ding{182} The \textbf{Internal Index Traversal} phase starts from the root level and finds the appropriate line segment in each level until reaching the bottom level (depicted by the red path in Figure~\ref{fig:pgm}). 
    Specifically, let $seg^{j}=(s^{j},a^{j},b^{j})$ denote the segment in the $j$-th level during the traversal. 
    The next segment in the $(j-1)$-th level to be traversed is found by searching $k$ within range $a^{j}\cdot(k-s^{j})+b^{j}\pm\epsilon_i$. 
    
    \noindent\ding{183} The \textbf{Last-Mile Search} phase performs an exact search on the raw sorted keys (i.e., $\mathcal{K}$) within the range $\widehat{\mathsf{rank}(k)}\pm\epsilon_\ell$ where $\widehat{\mathsf{rank}(k)}=a^{0}\cdot(k-s^{0})+b^{0}$ is the predicted rank and $seg^{0}=(s^{0},a^{0},b^{0})$ is the \emph{leaf segment} found during the internal index traversal phase. 
\end{example}

Recall that the index construction procedures introduced in Definition~\ref{def:pgm_construction} guarantees that the maximum errors for internal index traversal and last-mile search cannot exceed $\epsilon_i$ and $\epsilon_\ell$, respectively. 
Thus, the aforementioned lookup processing ensures the correct location (i.e., $\mathsf{rank}(k)$) must be found for an arbitrary query key $k$. 

\begin{table}[t]
    \centering
    \caption{Summary of theoretical results. For our result, $G$ is a constant that depends on data distribution characteristics and the pre-specified error bound $\epsilon$. }
    \label{tab:theory}
    \small
    \begin{tabular}{cccc}
    \toprule
        \cellcolor[HTML]{f4f5f6}\textbf{Results} & \cellcolor[HTML]{f4f5f6}\textbf{Base Model} & \cellcolor[HTML]{f4f5f6}\textbf{Lookup Time} & \cellcolor[HTML]{f4f5f6}\textbf{Space Cost} \\\midrule
        ICML'20~\cite{DBLP:conf/icml/FerraginaLV20} & Linear & $O(\log N)$ & $O(N/\epsilon^2)$ \\
        ICML'23~\cite{DBLP:conf/icml/ZeighamiS23} & Constant & $O(\log\log N)$ & $O(N\log N)$ \\
        \cellcolor[HTML]{B4FFB4}\textbf{Ours} & \cellcolor[HTML]{B4FFB4}Linear & \cellcolor[HTML]{B4FFB4}$O(\log\log N)$ & \cellcolor[HTML]{B4FFB4}$O(N/G)$ \\
    \bottomrule
    \end{tabular}
\end{table}

\begin{table*}
\caption{Summary of three micro-benchmark platforms. For platforms \texttt{X86-1} and \texttt{ARM} whose CPU chips adopt the ``big.LITTLE'' architecture~\cite{biglittle}, the hardware statistics of the performance cores (i.e., P-core) are reported. 
The reported L1/L2/L3 sizes represent the actual cache size that a physical core can access. 
Notably, for the Apple M3 chip, only L1 cache and L2 cache are available. }\label{tab:platforms}
\small
\begin{tabular}{ccccccccc}
\toprule
\cellcolor[HTML]{f4f5f6}\textbf{Platform}  & \cellcolor[HTML]{f4f5f6}\textbf{OS} & \cellcolor[HTML]{f4f5f6}\textbf{Compiler}           & \cellcolor[HTML]{f4f5f6}\textbf{CPU}                  & \cellcolor[HTML]{f4f5f6}\textbf{Frequency} & \cellcolor[HTML]{f4f5f6}\textbf{Memory}    & \cellcolor[HTML]{f4f5f6}\textbf{L1}     & \cellcolor[HTML]{f4f5f6}\textbf{L2}     & \cellcolor[HTML]{f4f5f6}\textbf{L3 (LLC)} \\
\midrule
\texttt{X86-1} & Ubuntu 20.04 & g++ 11 & Intel Core i7-13700K & 5.30 GHz (P-core)  & 32 GB DDR4 & 64 KiB & 256 KiB  & 16 MB    \\
\texttt{X86-2} & CentOS 9.4 & g++ 11 & AMD EPYC 7413 & 3.60 GHz   & 1 TB DDR4 & 64 KiB & 1 MB & 256 MB   \\
\texttt{ARM} & macOS 14.4.1 & clang++ 15 & Apple M3             & 4.05 GHz (P-core)          & 16 GB LPDDR5    & 320 KiB       & 16 MB   & N.A.        \\
\bottomrule
\end{tabular}
\vspace{-1.5ex}
\end{table*}

\subsection{Existing Theoretical Results}\label{subsec:exisiting_theory}
From Section~\ref{subsec:pgm}, two key questions need to be addressed to determine the space and time complexities of the PGM-Index. 
\ding{182} \emph{How many line segments are required to satisfy the error constraint for an $\epsilon$-PLA model?} 
and \ding{183} \emph{What is the height (i.e., the number of layers) of a PGM-Index?} 
In this section, we review the related theoretical studies~\cite{DBLP:conf/icml/FerraginaLV20,DBLP:conf/icml/ZeighamiS23} regarding these two questions, with major results summarized in Table~\ref{tab:theory}. 

The original PGM-Index~\cite{DBLP:journals/pvldb/FerraginaV20} first provides a straightforward lower bound to determine the index height. 

\begin{theorem}[PGM-Index Lower Bound~\cite{DBLP:journals/pvldb/FerraginaV20}]\label{theorem:pgm_height}
    Given a consecutive chunk of $2\epsilon+1$ sorted keys $\{k_i,\cdots,k_{i+2\epsilon}\}\subseteq\mathcal{K}$, there exists a horizontal line segment $\ell(x)=i+\epsilon$ such that $|\ell(k_j)-j|\leq\epsilon$ holds for $j=i,\cdots,i+2\epsilon$, implying that each line segment in an $\epsilon$-PLA can cover at least $2\epsilon+1$ keys. 
\end{theorem}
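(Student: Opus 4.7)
The proof plan is direct verification of the stated horizontal segment. First, I would exhibit the candidate explicitly: take the constant function $\ell(x) = i + \epsilon$, viewed as a segment with slope $a = 0$ and intercept $b = i+\epsilon$ (anchored at any starting point $s \leq k_i$). Since $a = 0 \geq 0$, this respects the monotonicity requirement on segments in an $\epsilon$-PLA stated after Eq.~\eqref{eq:pla}, so it is a legal candidate segment.

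Second, I would verify the error bound by a one-line computation. Because $\ell$ is constant,
\[
|\ell(k_j) - j| \;=\; |(i + \epsilon) - j|,
\]
and as $j$ ranges over the integer set $\{i, i+1, \dots, i+2\epsilon\}$, the quantity $|i + \epsilon - j|$ attains its maximum value $\epsilon$ exactly at the two endpoints $j = i$ and $j = i + 2\epsilon$, and is strictly smaller in between. Hence the required inequality $|\ell(k_j) - j| \leq \epsilon$ holds for every index $j$ in the chunk.

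Third, I would translate this into the stated coverage implication. Any valid $\epsilon$-PLA construction (in particular the greedy algorithm used inside PGM-Index) that packs consecutive keys into segments can always absorb at least as many keys as any explicit valid segment on the same starting prefix. Since we have exhibited one such valid $\epsilon$-segment covering $2\epsilon + 1$ consecutive keys, the segment the algorithm actually emits starting at $k_i$ must cover at least $2\epsilon + 1$ keys, yielding the claimed lower bound (and, as an immediate corollary, an upper bound of $\lceil N/(2\epsilon+1)\rceil$ on the total segment count at any single level).

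There is no genuine obstacle here; the entire argument reduces to one triangle-inequality observation. The only subtlety worth flagging is interpretive: the statement is a per-segment \emph{lower} bound on coverage, i.e.\ an upper bound on the segments-per-level, and it does \emph{not} by itself bound the overall height $H_{PGM}$ of the PGM-Index — that requires the distributional recursion developed later in Section~\ref{sec:theory}.
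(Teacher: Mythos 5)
Your proof is correct and takes the standard, essentially unique route: the paper states this result as a citation to the original PGM-Index work without reproducing the argument, and the intended proof is exactly the direct verification you give, namely that the constant segment $\ell(x)=i+\epsilon$ satisfies $|\ell(k_j)-j|=|i+\epsilon-j|\leq\epsilon$ for all $j\in\{i,\dots,i+2\epsilon\}$, after which the coverage lower bound follows because the greedy/optimal $\epsilon$-PLA fitting algorithm (O'Rourke's) always emits a segment at least as long as any feasible one starting at the same key. Your closing remark that this bounds segments-per-level but not the index height $H_{PGM}$ is also a correct and worthwhile observation, matching how the paper actually deploys the result.
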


Recall the recursive construction process in Definition~\ref{def:pgm_construction}, w.l.o.g., a PGM-Index with $\epsilon_i=\epsilon_\ell=\epsilon$ has a height of $O(\log_\epsilon N)=O(\log N)$. 
Thus, the index lookup takes time $O(\log N\cdot\log_2\epsilon)=O(\log N)$ as $\epsilon$ can be regarded as a pre-specified constant. 

Ferragina~et~al.~\cite{DBLP:conf/icml/FerraginaLV20} further tighten the results in Theorem~\ref{theorem:pgm_height} by showing that the expected segment coverage is proportional to $\epsilon^2$. 
Suppose that the key set to be indexed $\mathcal{K}=\{k_1,k_2,\cdots,k_N\}$ is a materialization of a random process $k_i = k_{i-1} + g_i$ for $i\geq 2$ where $g_i$'s are i.i.d.~random variables (r.v.) following some unknown distribution. 
We term the r.v.~$g_i$ as the ``gap'' and denote $\mu=\mathbf{E}[g_i]$ and $\sigma^2=\mathbf{Var}[g_i]$ as its mean and variance.
These distribution characteristics are crucial in determining the expected number of segments to satisfy the error constraints.

\begin{theorem}[Expected Line Segment Coverage~\cite{DBLP:conf/icml/FerraginaLV20}]\label{theorem:segment_coverage}
    Given a set of sorted keys $\mathcal{K}=\{k_1,k_2,\cdots,k_N\}$ and an error parameter $\epsilon$, let the gap be $g_i=k_i-k_{i-1}$. 
    If the condition $\epsilon\gg\sigma/\mu$ holds, with high probability, the expected number of keys in $\mathcal{K}$ covered by a line segment $\ell(x)=\mu\cdot (x- k_1)+1$ is given by 
    \begin{equation}\label{eq:pgm_coverage}
        \mathbf{E}\left[\min\left\{i\in \mathbb{N}^+\mid |\ell(k_i)-i|> \epsilon\right\}\right]={\mu^2\epsilon^2}/{\sigma^2}, 
    \end{equation}
    where $\ell(k_i)=\mu\cdot(k_i-k_1)+1$ is the predicted index for a key $k_i$. 
\end{theorem}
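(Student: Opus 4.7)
The plan is to recast the segment-coverage question as a first-exit time for a one-dimensional random walk, and then to pin down that exit time using a classical quadratic-martingale identity. Writing $k_i = k_1 + \sum_{j=2}^{i} g_j$ and substituting into $\ell(k_i)-i$, the prediction error becomes a linear combination of the centered gaps $g_j-\mu$. After the appropriate $O(1)$ rescaling, the quantity $X_{i-1}:=\ell(k_i)-i$ is a partial sum of i.i.d.\ mean-zero random variables with per-step variance of order $\sigma^2/\mu^2$. Consequently, $\min\{i:|\ell(k_i)-i|>\epsilon\}$ coincides with the first exit time $T$ of the walk $(X_n)$ from the symmetric interval $[-\epsilon,\epsilon]$.

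Next I would apply the optional stopping theorem to the quadratic Doob martingale $M_n = X_n^2 - n\cdot(\sigma^2/\mu^2)$ at the stopping time $T$. Granting integrability, the identity $\mathbf{E}[X_T^2] = (\sigma^2/\mu^2)\,\mathbf{E}[T]$ rearranges to
\[
\mathbf{E}[T] \;=\; \frac{\mu^2}{\sigma^2}\,\mathbf{E}[X_T^2] \;\approx\; \frac{\mu^2\epsilon^2}{\sigma^2},
\]
which is the target. To legitimize optional stopping, I would separately show $\mathbf{E}[T]<\infty$ by a Chebyshev/CLT argument: since the variance of $X_n$ grows linearly, the walk exits any bounded interval in finite expected time, and the usual truncation-then-limit argument applied to $T\wedge n$ yields the martingale identity at $T$ itself.

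The main obstacle I foresee is controlling the \emph{overshoot} $X_T^2-\epsilon^2$: the walk does not stop exactly on the boundary, so the approximation $X_T^2\approx\epsilon^2$ requires justification. This is precisely where the hypothesis $\epsilon\gg\sigma/\mu$ enters. Since a single step has standard deviation $\sigma/\mu$, the condition forces a typical step to be vanishingly small compared to the boundary, and with high probability the overshoot $|X_T|-\epsilon$ is $o(\epsilon)$, giving $X_T^2 = \epsilon^2(1+o(1))$. I would make this rigorous by combining a Chebyshev-type tail bound on the terminal increment with a mild truncation step to dispose of atypically large gaps $g_j$; the small failure probability of this concentration is exactly what justifies the ``with high probability'' qualifier in the statement. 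Plugging $\mathbf{E}[X_T^2]=\epsilon^2(1+o(1))$ back into the martingale identity produces the asserted expected coverage $\mu^2\epsilon^2/\sigma^2$ up to lower-order corrections. As a sanity check, the resulting dependence is natural: smaller gap variance $\sigma^2$ or larger mean gap $\mu$ lets each segment absorb proportionally more keys before the error accumulates past $\epsilon$.
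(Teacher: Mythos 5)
Your proposal is correct, but note that the paper you were asked about does not actually prove Theorem~\ref{theorem:segment_coverage}: it is imported verbatim as a black-box result from Ferragina, Lillo, and Vinciguerra~\cite{DBLP:conf/icml/FerraginaLV20}, and the present authors only build on it (in Lemmas~\ref{lemma:coverage_rec} and~\ref{lemma:coverage_level_i}). Your random-walk reformulation and optional-stopping argument is precisely the route taken in that cited source: model $\ell(k_i)-i$ as a partial sum of i.i.d.\ centered increments, identify the segment coverage with the first exit time $T$ from $[-\epsilon,\epsilon]$, apply the Doob quadratic martingale (equivalently, Wald's second identity) to get $\mathbf{E}[T]=\epsilon^2\cdot(\mu^2/\sigma^2)$, and invoke $\epsilon\gg\sigma/\mu$ to make the overshoot negligible and justify the ``with high probability'' qualifier. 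The one point worth making explicit, which you pass over with the phrase ``after the appropriate $O(1)$ rescaling'': for the per-step increment $\ell(k_i)-\ell(k_{i-1})-1$ to be mean zero with variance $\sigma^2/\mu^2$, the line must have slope $1/\mu$, not $\mu$ as written in Eq.~\eqref{eq:pgm_coverage}. Your variance bookkeeping is consistent with the corrected slope $1/\mu$ (which is also what the cited ICML'20 paper uses), so your argument is sound, but you should flag that you are silently fixing a typo in the statement rather than leaving the reader to guess what was rescaled.
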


By constructing a special line segment with slope $\mu$, Theorem~\ref{theorem:segment_coverage} establishes the relationship between the expected segment coverage and the error constraint $\epsilon$. 
Based on Theorem~\ref{theorem:segment_coverage}, for a set of $N$ sorted keys, the expected number of segments\footnote{The conclusion is drawn hastily as, in general, $1/\mathbf{E}[X]\neq\mathbf{E}[1/X]$ for an arbitrary random variable $X$. 
A more rigorous proof can be found in Theorem~4 of~\cite{DBLP:conf/icml/FerraginaLV20}.} of a \emph{one-layer} $\epsilon$-PLA can be derived as ${N\sigma^2}/{\epsilon^2\mu^2}$. 
In the practical PGM-Index implementation, an \emph{optimal} $\epsilon$-PLA fitting algorithm~\cite{DBLP:journals/cacm/ORourke81} is adopted to \emph{minimize} the number of line segments while ensuring the error constraint $\epsilon$ is met. 
Thus, the expected number of segments can be then bounded by $O({N\sigma^2}/{\epsilon^2\mu^2})$.

Combining the results in Theorem~\ref{theorem:pgm_height} and Theorem~\ref{theorem:segment_coverage}, Ferragina~et~al.~\cite{DBLP:conf/icml/FerraginaLV20} conclude that a PGM-Index with $\epsilon_i=\epsilon_\ell=\epsilon$ using $O(N/\epsilon^2)$ space can handle lookup queries in $O(\log N)$ time with high probability. 
By setting $\epsilon=\Theta(B)$, a PGM-Index can achieve the same logarithmic index lookup complexity of a B+-tree while reducing the space complexity from B+-tree's $O(N/B)$ to $O(N/B^2)$. 

In addition to~\cite{DBLP:conf/icml/FerraginaLV20}, a recent study~\cite{DBLP:conf/icml/ZeighamiS23} also delves into the theoretical aspects of learned index. 
They demonstrate that a Recursive Model Index~\cite{DBLP:conf/sigmod/KraskaBCDP18} using \emph{piece-wise constant} functions as base models can achieve a sub-logarithmic lookup complexity of $O(\log\log N)$ at the cost of \emph{super-linear} space, specifically $O(N\log N)$. 

\noindent\textbf{Our Results.} Inspired by the findings in~\cite{DBLP:conf/icml/ZeighamiS23}, we reasonably speculate that PGM-Indexes, utilizing $\epsilon$-PLA as base models, can achieve the same \emph{sub-logarithmic} lookup time complexity with \emph{reduced} space overhead, given that a constant function can be regarded as a special case of a piecewise linear function. 
As summarized in Table~\ref{tab:theory}, our analysis in Section~\ref{sec:theory} concludes that, w.h.p., the PGM-Index can search a query key in $O(\log\log N)$ time while requiring only linear space $O(N/G)$, where $G$ is a constant related to the error parameter $\epsilon$ and gap distribution characteristics.

\begin{table}[t]
    \centering
    \caption{Statistics of benchmark datasets. $h_D$ is the distribution hardness ratio. $\overline{Cov}$ is the observed segment coverage to fit a PLA model with an error bound of $\epsilon=16$.}\label{tab:datasets}
    \small
    \begin{tabular}{cccccc}
    \toprule
        \cellcolor[HTML]{f4f5f6}\textbf{Dataset} & \cellcolor[HTML]{f4f5f6}\textbf{Category} & \cellcolor[HTML]{f4f5f6}\textbf{\#Keys} & \cellcolor[HTML]{f4f5f6}\textbf{Raw Size} & \cellcolor[HTML]{f4f5f6}$h_{D}$ & \cellcolor[HTML]{f4f5f6}$\overline{Cov}$\\\midrule
         \texttt{fb} & Real & 200 M & 1.6 GB & 3.88 & 94 \\
         \texttt{wiki} & Real & 200 M & 1.6 GB & 1.77 & 877\\
         \texttt{books} & Real & 800 M & 6.4 GB & 5.39 & 101\\
         \texttt{osm} & Real & 800 M & 6.4 GB & 1.91 & 129\\
         % \midrule
         % \texttt{uniform} & Synthetic & 400 M & 3.2 GB & Varied & N.A. \\ 
         % \texttt{normal} & Synthetic & 400 M & 3.2 GB & Varied & N.A. \\ 
         % \texttt{lognormal} & Synthetic & 400 M & 3.2 GB & Varied & N.A. \\
         \bottomrule
    \end{tabular}
\end{table}

\section{Microbenchmark Setting}\label{sec:benchmark_setting}
To ensure consistency in presentation, this section outlines the microbenchmark setups, including the hardware platforms, datasets, and query workloads. 
The remainder of this paper adopts this microbenchmark to either motivate or validate the theoretical findings and proposed methodologies. 

\noindent\textbf{Platforms.} We perform the subsequent experiments on three platforms with different architectures: 
\ding{182} \texttt{X86-1} is an Ubuntu desktop equipped with an Intel\textcopyright~Core\texttrademark~i7-13700K CPU (5.30 GHz, P-core) and 64 GB of memory; 
\ding{183} \texttt{X86-2} is a CentOS server with 2 AMD\textcopyright~EPYC\texttrademark~7413 CPUs (3.60 GHz) and 1 TB of memory; 
and \ding{184} \texttt{ARM} is a Macbook Air laptop with an Apple Silicon M3 CPU (4.05 GHz, P-core) and 16 GB of unified memory, which offers higher memory bandwidth compared to the \texttt{X86} platforms. 
As we will discuss in Section~\ref{sec:ineffective}, searching a PGM-Index is highly memory-bound, and factors such as cache latency and memory bandwidth can significantly affect query performance\footnote{Typical access latencies for L1 cache, last level cache (LLC), and main memory are 1 ns, 20 ns, and 100 ns, respectively.}. 
Table~\ref{tab:platforms} summarizes the specifications of the benchmark platforms. 

In addition, all the experiments are written in C++ and compiled using g++ 11.4 on \texttt{X86-1} and \texttt{X86-2} and clang++ 15 on \texttt{ARM}. 
The complete microbenchmark implementation and experimental results are publicly available at~\cite{pgm++}.

\noindent\textbf{Benchmark Datasets.} 
We adopt 4 real datasets from SOSD~\cite{DBLP:journals/pvldb/MarcusKRSMK0K20} that have been widely evaluated in previous studies~\cite{DBLP:conf/sigmod/DingMYWDLZCGKLK20,DBLP:conf/sigmod/KipfMRSKK020,DBLP:journals/pvldb/WongkhamLLZLW22,DBLP:journals/pvldb/ZhangG22,DBLP:journals/pvldb/WuZCCWX21}. 
Specifically, \ding{182} \texttt{fb} is a set of user IDs randomly sampled from Facebook~\cite{DBLP:conf/sigmod/SandtCP19}; 
\ding{183} \texttt{wiki} is a set of edit timestamp IDs committed to Wikipedia~\cite{wikidata}; 
\ding{184} \texttt{books} is the dataset of book popularity from Amazon; and 
\ding{185} \texttt{osm} is a set of cell IDs from OpenStreetMap~\cite{openstreetmap}. 
We also generate 3 synthetic datasets by sampling from uniform, normal, and log-normal distributions, following a process similar to~\cite{DBLP:journals/pvldb/MarcusKRSMK0K20,zhang2024making}. 
All keys are stored as 64-bit unsigned integers (\texttt{uint64\_t} in C++), and Table~\ref{tab:datasets} summarizes the dataset statistics. 

To quantify the difficulty of indexing a dataset, we define the \emph{distribution hardness ratio} as $h_D=\sigma^2/\mu^2$ where $\mu$ and $\sigma^2$ represent the mean and variance of the gap distribution for a dataset. 
According to Theorem~\ref{theorem:segment_coverage}, a higher $h_D$ implies a harder dataset to learn, as more segments are required to meet the error constraint $\epsilon$. 
However, as illustrated in Figure~\ref{fig:gap_dist}, extreme values can easily influence $h_D$, leading to an overestimation of the necessary segment count. 
For example, on dataset \texttt{osm}, the original hardness ratio $h_D=1.27\times 10^6$, and according to Theorem~\ref{theorem:segment_coverage}, the expected segment coverage for an $\epsilon$-PLA with $\epsilon=16$ can be estimated as, 
\begin{equation}
\small
    \frac{\mu^2\cdot\epsilon^2}{\sigma^2} = \frac{\epsilon^2}{h_D} = \frac{16^2}{1.27\times 10^6} \approx 0.0002,
\end{equation}
which is far away from 129, the observed segment coverage. 
 % (column $\overline{Cov}$ in Table~\ref{tab:datasets})

To mitigate this, we clip the observed gaps at the 1\%- and 99\%-quantiles and re-calculate $h_D$ based on the clipped gaps (as reported in column $h_D$ of Table~\ref{tab:datasets}). 
After removing the extreme gaps, the revised $h_D=5.39$ on dataset \texttt{osm}, and the corresponding estimated segment coverage is $16^2/5.39\approx 71.3$, which is much closer to the observed value. 
Notably, accurately estimating the segment coverage is vital to building an effective cost model. 
The estimator based on clipped gaps remains too \emph{coarse}, as it fails to capture local data variations. 
In Section~\ref{subsec:cost_model}, we will develop a more fine-grained coverage estimator based on adaptive data partition. 

\noindent\textbf{Query Workloads.} 
Similar to the settings in~\cite{DBLP:conf/sigmod/KraskaBCDP18,DBLP:journals/pvldb/FerraginaV20,DBLP:conf/sigmod/KipfMRSKK020,DBLP:journals/pvldb/ZhangG22}, our work focuses on \emph{in-memory read-heavy} workloads. 
Given a key set $\mathcal{K}$, we generate the query workload by randomly sampling $S$ (by default $S=5,000$) keys from $\mathcal{K}$. 
To simulate different access patterns, we sample lookup keys from two distributions: 
\ding{182} \emph{Uniform}, where every key in $\mathcal{K}$ has an equal likelihood of being sampled; 
and \ding{183} \emph{Zipfan}, where the probability of sampling the $i$-th key in $\mathcal{K}$ is given by $p(i)=i^\alpha/\sum_{j=1}^{N}j^\alpha$. 
For the Zipfan workload, by default, we set the parameter $\alpha=1.3$ such that over 90\% of index accesses occur within the range of $(0, 10^3]$.

\begin{figure}[t]
    \centering
    \includegraphics[width=0.36\textwidth]{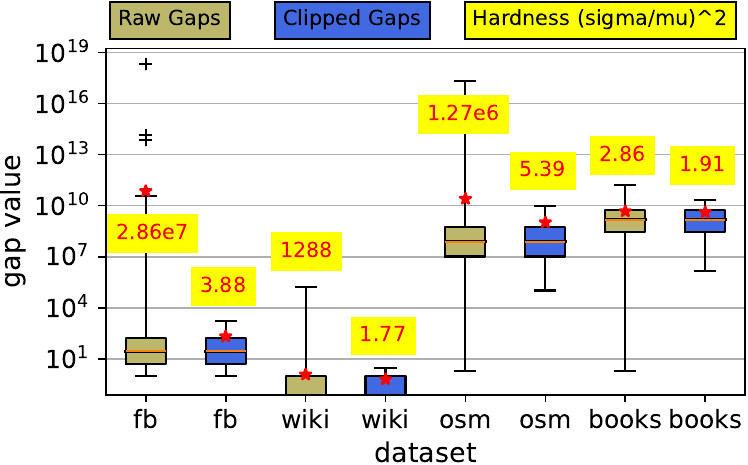}
    \caption{Gap distributions for 4 real datasets. In the box plots, the horizontal lines and the star marks refer to the medians and means of data, respectively.}
    \label{fig:gap_dist}
\end{figure}

\begin{table}[t]
    \centering
    \caption{Statistics of PGM-Index ($\epsilon_i=\epsilon_\ell=\epsilon$) and B+-tree (fan-out $B=\epsilon$) under different configurations. 
    $\epsilon$ is fixed to 8 when varying data size $N$ (synthetic uniform keys), and $N$ is fixed to 800M when varying $\epsilon$ (real dataset \texttt{books}).
    The ratio in percentage refers to the proportion of leaf segments contributing to the total index memory footprint.}
    \label{tab:index_height_experiment}
    \small
    \begin{tabular}{c|ccccc}
    \toprule
        \cellcolor[HTML]{f4f5f6}$N$ & \multicolumn{1}{c}{\cellcolor[HTML]{f4f5f6}\textbf{\begin{tabular}[c]{@{}c@{}}\textbf{PGM}\\ \textbf{Height}\end{tabular}}} & \multicolumn{1}{c}{\cellcolor[HTML]{f4f5f6}\begin{tabular}[c]{@{}c@{}}\textbf{Leaf}\\ \textbf{Segments}\end{tabular}} & \multicolumn{1}{c}{\cellcolor[HTML]{f4f5f6}\begin{tabular}[c]{@{}c@{}}\textbf{Internal}\\ \textbf{Segments}\end{tabular}} & \multicolumn{1}{c}{\cellcolor[HTML]{f4f5f6}\begin{tabular}[c]{@{}c@{}}\textbf{\% over}\\\textbf{Total}\end{tabular}} & \multicolumn{1}{c}{\cellcolor[HTML]{f4f5f6}\textbf{\begin{tabular}[c]{@{}c@{}}B+-tree\\ Height\end{tabular}}} \\\midrule
        $10^3$ & 2 & 2 & 2 & 50.0\% & 4 \\
        $10^4$ & 2 & 16 & 2 & 88.9\% & 6\\
        $10^5$ & 2 & 140 & 2 & 98.6\% & 7\\
        $10^6$ & 2 & 1,388 & 2 & 99.9\% & 8\\
        $10^7$ & 3 & 13,918 & 12 & 99.9\% & 9\\
        $10^8$ & 3 & 139,376 & 109 & 99.9\% & 10\\
        $10^9$ & 4 & 1,394,003 & 1,049 & 99.9\% & 11\\  
        \midrule
        \cellcolor[HTML]{f4f5f6}$\epsilon$ ($B$) & \multicolumn{1}{c}{\cellcolor[HTML]{f4f5f6}\textbf{\begin{tabular}[c]{@{}c@{}}\textbf{PGM}\\ \textbf{Height}\end{tabular}}} & \multicolumn{1}{c}{\cellcolor[HTML]{f4f5f6}\begin{tabular}[c]{@{}c@{}}\textbf{Leaf}\\ \textbf{Segments}\end{tabular}} & \multicolumn{1}{c}{\cellcolor[HTML]{f4f5f6}\begin{tabular}[c]{@{}c@{}}\textbf{Internal}\\ \textbf{Segments}\end{tabular}} & \multicolumn{1}{c}{\cellcolor[HTML]{f4f5f6}\begin{tabular}[c]{@{}c@{}}\textbf{\% over}\\\textbf{Total}\end{tabular}} & \multicolumn{1}{c}{\cellcolor[HTML]{f4f5f6}\textbf{\begin{tabular}[c]{@{}c@{}}B+-tree\\ Height\end{tabular}}} \\\midrule
        % 4 & \cellcolor[HTML]{B4FFB4}5 & 31,502,379 & 409,962 & 98.7\% & \cellcolor[HTML]{FFECEC}16 \\
        8 & 4 & 16,859,902 & 46,572 & 99.7\% & 11\\
        16 & 4 & 7,943,403 & 4,100 & 99.9\% & 9\\
        32 & 3 & 2,464,229 & 272 & 99.99\% & 7\\
        64 & 3 & 797,152 & 60 & 99.99\% & 6\\
        128 & 3 & 267,966 & 25 & 99.99\% & 6\\
        256 & 3 & 81,340 & 12 & 99.99\% & 5\\
        512 & 3 & 22,684 & 7 & 99.99\% & 5\\
        % 1,024 & 2 & 5,956 & 2 & 99.99\% & 4\\
    \bottomrule
    \end{tabular}
\end{table}

\section{Why Are PGM-Indexes So Effective?}\label{sec:theory}
In this section, we first motivate the necessity of an index height lower than $O(\log N)$ through benchmark results ($\vartriangleright$ Section~\ref{subsec:motivation_exp}).
Then, we establish a tighter sub-logarithmic bound ($\vartriangleright$ Section~\ref{subsec:theory}). 
Finally, a case study on uniformly distributed keys is provided to further validate our theoretical analysis ($\vartriangleright$ Section~\ref{subsec:case_uniform}). 

\subsection{Motivation Experiments}\label{subsec:motivation_exp}
We construct the PGM-Indexes and B+-trees using various configurations, with index statistics summarized in Table~\ref{tab:index_height_experiment}. 
Intuitively, a B+-tree with a fan-out of $B=\epsilon$ can be considered analogous to a PGM-Index where $\epsilon_i=\epsilon_\ell=\epsilon$, since a B+-tree index guarantees the search key to be located within a data block of size $B$. 

As shown in Table~\ref{tab:index_height_experiment}, we first fix $B=\epsilon=16$ while varying the input data size across $\{10^3,10^4,\cdots,10^9\}$ using synthetic uniform keys. 
As the data size $N$ increases, the height of a B+-tree index ($H_B$) follows a logarithmic growth pattern, adhering to the formula $H_B=\lceil1+\log_B\frac{N+1}{2}\rceil$. 
On the other hand, the PGM-Index height ($H_{PGM}$) grows at a much slower, sub-logarithmic rate. 
Besides, on dataset \texttt{books}, when varying $\epsilon$ within $\{2^2,2^3,\cdots,2^{10}\}$, the results consistently demonstrate that $H_{PGM}\ll H_{B}$ holds across all $\epsilon$ configurations.
Moreover, the decrease in $H_{PGM}$ relative to $\epsilon$ is also notably slower than that of $H_{B}$. 
% Furthermore, when fixing the input dataset as \texttt{books} (800 M) and varying $\epsilon$ within $\{2^2,2^3,\cdots,2^{10}\}$, 
% the results in Table~\ref{tab:index_height_experiment} consistently demonstrate that $H_{PGM}\ll H_{B}$ holds across all $\epsilon$ settings, and the decrease in $H_{PGM}$ relative to $\epsilon$ is also notably slower than that of $H_{B}$. 
In addition to the index height, we also report the numbers of leaf and internal segments in Table~\ref{tab:index_height_experiment}. 
Unlike B+-trees or other BST variants, the PGM-Index exhibits a highly ``flat'' structure, with most of the line segments (up to \textbf{99.99\%}) located at the bottom level, aligning with its slow height growth. 

Notably, the results obtained from different datasets and additional $\epsilon$ configurations are similar and therefore omitted here due to page limitations. 
The complete results are available at~\cite{pgm++}. 

\subsection{Theoretical Analysis}\label{subsec:theory}
% The results in Section~\ref{subsec:motivation_exp} reveal that a \emph{logarithmic} index height, as introduced in Theorem~\ref{theorem:pgm_height}, is too loose. 
In this section, we aim to provide a new bound tightening the previous results. 
The road map for establishing our theoretical results is outlined below.

\noindent\ding{182} Lemma~\ref{lemma:coverage_rec} and Lemma~\ref{lemma:coverage_level_i} provide a lower bound for the expected segment coverage in an arbitrary level of the PGM-Index;

\noindent\ding{183} Theorem~\ref{theorem:pgm_height_new} derives the PGM-Index height as $O(\log\log N)$, indicating sub-logarithmic growth w.r.t.~the data size $N$;

\noindent\ding{184} Theorem~\ref{theorem:complexity} concludes the space and time complexities of the PGM-Index as summarized in Table~\ref{tab:theory}. 

Notably, unless explicitly stated otherwise, the subsequent analyses adhere to the core assumptions regarding \emph{gaps} from Theorem~\ref{theorem:segment_coverage}~\cite{DBLP:conf/icml/FerraginaLV20}, that is, the gaps are i.i.d.~random variables following some unknown distribution with expectation $\mu$ and variance $\sigma^2$. 
Besides, as discussed in~\cite{DBLP:conf/icml/FerraginaLV20}, the ``i.i.d.'' assumption can be further relaxed to \emph{weakly correlated} random variables without affecting the correctness of theoretical results.

\begin{lemma}[Expected Coverage Recursion]\label{lemma:coverage_rec}
    Given a set of $N$ sorted keys $\mathcal{K}=\{k_1,\cdots,k_N\}$ and an error parameter $\epsilon$, let a random variable $C_i$ denote the number of keys in the $(i-1)$-th level that a segment in the $i$-th level can cover (i.e., satisfying the error constraint $\epsilon$). 
    Specifically, $C_0$ denotes the leaf segment coverage (i.e., level-0) on the input key set $\mathcal{K}$. 
    Then, the following recursion holds for $\mathbf{E}[C_i]$, 
    \begin{equation}
    \small
        \mathbf{E}[C_i]=\frac{\mu^2\cdot\epsilon^2}{\sigma^2}\cdot\mathbf{E}[C_0\cdot C_1\cdots C_{i-1}].
    \end{equation}
\end{lemma}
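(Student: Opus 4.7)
The plan is to apply Theorem~\ref{theorem:segment_coverage} recursively, treating each level of the PGM-Index as a fresh $\epsilon$-PLA fitting problem on a coarser sequence of ``keys.'' Specifically, the keys seen by the level-$i$ PLA are the starting points of the level-$(i-1)$ segments, so my first task is to characterize the distribution of the effective gap $G_i$ between two consecutive level-$(i-1)$ segment starts in terms of the original gap statistics $(\mu,\sigma^2)$, and then reuse Theorem~\ref{theorem:segment_coverage} on this upgraded distribution.

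First I would condition on the coverages $C_0, C_1,\ldots,C_{i-1}$ along a single chain of nested segments. By the bottom-up construction in Definition~\ref{def:pgm_construction}, a level-$(i-1)$ segment encloses exactly $M_{i-1}:=C_0\cdot C_1\cdots C_{i-1}$ original keys of $\mathcal{K}$, so $G_i$ is precisely the sum of $M_{i-1}$ consecutive original gaps. Using the i.i.d.\ assumption from Theorem~\ref{theorem:segment_coverage}, this immediately yields
\[
\mathbf{E}[G_i\mid M_{i-1}]=M_{i-1}\,\mu,\qquad \mathbf{Var}[G_i\mid M_{i-1}]=M_{i-1}\,\sigma^2.
\]
Plugging the updated mean $\mu_i=M_{i-1}\mu$ and variance $\sigma_i^2=M_{i-1}\sigma^2$ into Theorem~\ref{theorem:segment_coverage} applied at level $i$ gives
\[
\mathbf{E}[C_i\mid C_0,\ldots,C_{i-1}]=\frac{\mu_i^2\epsilon^2}{\sigma_i^2}=\frac{\mu^2\epsilon^2}{\sigma^2}\cdot M_{i-1},
\]
and taking the outer expectation over $C_0,\ldots,C_{i-1}$ delivers the claimed recursion, since $\mathbf{E}[M_{i-1}]=\mathbf{E}[C_0\cdot C_1\cdots C_{i-1}]$.

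The main obstacle I anticipate is justifying that the hypotheses of Theorem~\ref{theorem:segment_coverage} transfer cleanly to each upper level. Two subtleties deserve attention: (a) the effective gaps $G_i$ arising from different level-$(i-1)$ segments are sums over \emph{disjoint} blocks of original gaps, so they inherit weak independence from the level-0 i.i.d.\ assumption, but one must still verify that the side condition $\epsilon\gg\sigma_i/\mu_i$ continues to hold---in fact this only becomes easier with depth, as $\sigma_i/\mu_i=\sigma/(\mu\sqrt{M_{i-1}})$ shrinks geometrically as segments grow; and (b) Theorem~\ref{theorem:segment_coverage} is stated for the specific slope-$\mu$ line $\ell(x)=\mu(x-k_1)+1$, whereas the PGM build uses the \emph{optimal} $\epsilon$-PLA of~\cite{DBLP:journals/cacm/ORourke81}, which can only cover at least as many points. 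The same lifting argument invoked after Theorem~\ref{theorem:segment_coverage} in~\cite{DBLP:conf/icml/FerraginaLV20}---replacing the slope-$\mu$ witness by the optimal cover---should carry over verbatim and turn the per-level equality into the high-probability form already invoked throughout the paper.
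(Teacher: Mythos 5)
Your proposal is correct and tracks the paper's own proof very closely: both condition on $C_0,\ldots,C_{i-1}$, observe that a level-$i$ gap is a sum of $C_0\cdots C_{i-1}$ consecutive original gaps with conditional mean $M_{i-1}\mu$ and variance $M_{i-1}\sigma^2$, apply Theorem~\ref{theorem:segment_coverage} at level $i$, and integrate via the law of total expectation. Your two flagged subtleties mirror the paper's own caveats (the footnote about assuming equal coverage per level, and the remark that the optimal $\epsilon$-PLA can only cover at least as many points as the slope-$\mu$ witness), so this is essentially the same argument spelled out slightly more explicitly.
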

\begin{proof}
    According to the law of total expectation~\cite{chung2000course}, 
    \begin{equation}\label{eq:int_total_expectation}
    \small
        \begin{aligned}
            \mathbf{E}\left[C_i\right]=\idotsint\limits&\mathbf{E}\left[C_i\mid C_0=c_0,\cdots,C_{i-1}=c_{i-1}\right]\times \\
            &\,\,\,\, f(c_0,\cdots,c_{i-1})\, dc_0\cdots dc_{i-1},\\
        \end{aligned}
    \end{equation}
    where $f(c_0,\cdots,c_{i-1})$ is the joint probability density function of random variables $C_0,\cdots,C_{i-1}$.

    Suppose that $g$'s are the gaps of the original key set $\mathcal{K}$. 
    When fixing $C_0,\cdots,C_{i-1}$ to $c_0,\cdots,c_{i-1}$, as illustrated in Figure~\ref{fig:gaps_next_level}, w.l.o.g., an arbitrary gap in the $i$-th level, denoted by $g^{(i)}$, should be the sum of $c_0\cdot c_1\cdots c_{i-1}$ consecutive gaps on the raw key set $\mathcal{K}$\footnote{Here, we assume that all line segments within the same level exhibit equal coverage. A more rigorous analysis can be established by using concentration bounds like Chebyshev’s inequality or Chernoff bound~\cite{chung2000course}, which is omitted here for brevity.}. 
    Thus, according to Theorem~\ref{theorem:segment_coverage}, on a key set with gaps as $g^{(i)}$, the expected segment coverage (conditioned on $C_0,\cdots,C_{i-1}$) in the $i$-th level for an $\epsilon$-PLA should be, 
    \begin{equation}\label{eq:condition_expectation}
    \small
        \begin{aligned}
            \mathbf{E}[C_i|C_0=c_0,&\cdots,C_{i-1}=c_{i-1}]=\frac{\mathbf{E}\left[g^{(i)}\right]^2\cdot\epsilon^2}{\mathbf{Var}\left[g^{(i)}\right]}\\
            &=\frac{\mathbf{E}\left[\sum_{j'=j}^{j+c_0\cdot c_1\cdots c_{i-1}}g_{j'}\right]^2\cdot\epsilon^2}{\mathbf{Var}\left[\sum_{j'=j}^{j+c_0\cdot c_1\cdots c_{i-1}}g_{j'}\right]}\\
            &=(c_0\cdot c_1\cdots c_{i-1})\cdot\frac{\mu^2\cdot\epsilon^2}{\sigma^2},
        \end{aligned}
    \end{equation}
    where $\mu$ and $\sigma^2$ are the mean and variance of the gaps on the \emph{original key set} $\mathcal{K}$. 
    Taking Eq.~\eqref{eq:condition_expectation} into the integral in Eq.~\eqref{eq:int_total_expectation}, we have,
    \begin{equation}
    \small
    \begin{aligned}
        \mathbf{E}[C_i]&=\frac{\mu^2\cdot\epsilon^2}{\sigma^2}\idotsint\prod_{j=0}^{i-1}c_j\cdot f(c_0,\cdots,c_{i-1})\, dc_0\cdots dc_{i-1}\\
        &=\frac{\mu^2\cdot\epsilon^2}{\sigma^2}\cdot\mathbf{E}[C_0\cdot C_1\cdots C_{i-1}].
    \end{aligned}
    \end{equation}
    Thus, we have the statement in Lemma~\ref{lemma:coverage_rec}. 
\end{proof}

\begin{lemma}[Expected Coverage of Level-$i$]\label{lemma:coverage_level_i}
    The following lower bound holds for $\mathbf{E}[C_i]$, 
    \begin{equation}\label{eq:ci_lower_bound}
    \small
        \mathbf{E}\left[C_i\right]\geq\left(\frac{\mu^2\cdot\epsilon^2}{\sigma^2}\right)^{2^i}.
    \end{equation}
\end{lemma}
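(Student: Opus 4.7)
The plan is to abbreviate $G = \mu^2\epsilon^2/\sigma^2$ and derive the bound by chaining the recursion from Lemma~\ref{lemma:coverage_rec} with Jensen's inequality, rather than trying to factor $\mathbf{E}[C_0 C_1 \cdots C_{i-1}]$ directly. The statement of Lemma~\ref{lemma:coverage_rec} gives $\mathbf{E}[C_i] = G \cdot \mathbf{E}[C_0 C_1 \cdots C_{i-1}]$, and the bottleneck is controlling the joint expectation of a product whose factors are not assumed independent (segments at higher levels aggregate key ranges determined by lower-level segment coverages).

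First, I would introduce the running product $P_i = \prod_{j=0}^{i} C_j$ and extract from the proof of Lemma~\ref{lemma:coverage_rec}, specifically Eq.~\eqref{eq:condition_expectation}, the conditional identity $\mathbf{E}[C_i \mid C_0,\dots,C_{i-1}] = G \cdot P_{i-1}$. Multiplying both sides by $P_{i-1}$ and applying the tower rule gives
$$\mathbf{E}[P_i] = \mathbf{E}\bigl[\mathbf{E}[C_i \mid C_0,\dots,C_{i-1}]\cdot P_{i-1}\bigr] = G\cdot\mathbf{E}[P_{i-1}^2].$$
Next, Jensen's inequality applied to $x \mapsto x^2$ yields $\mathbf{E}[P_{i-1}^2] \geq (\mathbf{E}[P_{i-1}])^2$, so the sequence $a_i := \mathbf{E}[P_i]$ obeys the self-squaring recurrence $a_i \geq G \cdot a_{i-1}^2$ with base case $a_0 = \mathbf{E}[C_0] = G$ (the latter from Theorem~\ref{theorem:segment_coverage}).

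Then I would resolve this recurrence by a short induction showing $a_i \geq G^{2^{i+1}-1}$: the base case is $a_0 = G = G^{2^1 - 1}$, and the inductive step gives $a_i \geq G\cdot(G^{2^i-1})^2 = G^{2^{i+1}-1}$. Finally, substituting back into Lemma~\ref{lemma:coverage_rec} yields
$$\mathbf{E}[C_i] = G\cdot\mathbf{E}[P_{i-1}] = G\cdot a_{i-1} \geq G\cdot G^{2^i-1} = G^{2^i},$$
which is exactly the claimed bound in Eq.~\eqref{eq:ci_lower_bound}.

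The main obstacle, as noted, is the absence of independence among the $C_j$'s, which rules out a direct factorization of $\mathbf{E}[C_0 C_1 \cdots C_{i-1}]$ or a simple appeal to positive association. The Jensen step is what sidesteps this: because the recursion coming out of the tower property already involves the second moment $\mathbf{E}[P_{i-1}^2]$, we automatically get a lower bound by its squared mean, no correlation structure required beyond the gap assumptions already invoked in Lemma~\ref{lemma:coverage_rec}. The only hidden subtlety is the ``equal coverage within a level'' simplification noted in the footnote of Lemma~\ref{lemma:coverage_rec}; I would adopt the same convention and flag that a fully rigorous treatment would replace it with a concentration bound (Chebyshev or Chernoff) on per-segment coverage, which does not affect the order of the final bound.
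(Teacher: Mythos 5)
Your proof is correct, and it takes a genuinely different route than the paper. The paper's proof argues by induction directly on $\mathbf{E}[C_i]$: it factors $\mathbf{E}[C_0\cdots C_{i-1}] \geq \mathbf{E}[C_0\cdots C_{i-2}]\cdot\mathbf{E}[C_{i-1}]$ by asserting that $C_{i-1}$ is positively correlated with the running product, then recognizes the first factor (times $G$) as $\mathbf{E}[C_{i-1}]$ via Lemma~\ref{lemma:coverage_rec}, landing on $\mathbf{E}[C_i]\geq\mathbf{E}[C_{i-1}]^2$. You instead run the induction on $a_i := \mathbf{E}[P_i]$ with $P_i=\prod_{j\leq i}C_j$, use the conditional identity $\mathbf{E}[C_i\mid C_0,\dots,C_{i-1}]=G\,P_{i-1}$ from Eq.~\eqref{eq:condition_expectation} together with the tower rule to get $a_i = G\,\mathbf{E}[P_{i-1}^2]$, and then lower-bound the second moment by $(\mathbf{E}[P_{i-1}])^2$ via Jensen. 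Both arguments bottom out at the same place, but yours is arguably tighter as written: the paper leaves the positive-correlation claim unproved (it does follow from the conditional structure, since $\mathbf{E}[C_{i-1}\mid C_0,\dots,C_{i-2}]$ is increasing in the running product, but that step is only asserted), whereas your Jensen step needs no correlation structure at all beyond the conditional identity the paper already establishes. Both proofs inherit the same ``equal coverage within a level'' simplification from the footnote of Lemma~\ref{lemma:coverage_rec}, which you correctly flag.
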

\begin{proof}
    We prove Lemma~\ref{lemma:coverage_level_i} using mathematical induction. 
    \ding{182}~\textbf{Base Case ($i'=0$):}  
    According to Theorem~\ref{theorem:segment_coverage}, $\mathbf{E}[C_0]={\mu^2\epsilon^2}/{\sigma^2}$, satisfying the inequality in Eq.~\eqref{eq:ci_lower_bound}. 
    
    \noindent\ding{183}~\textbf{Inductive Step:} 
    Assume that the lower bound in Eq.~\eqref{eq:ci_lower_bound} holds for $i'=i-1$, i.e., 
    \begin{equation}\label{eq:case_i-1}
    \small
        \mathbf{E}\left[C_{i-1}\right]\geq\left(\frac{\mu^2\cdot\epsilon^2}{\sigma^2}\right)^{2^{i-1}}.
    \end{equation}
    Then, for the case of $i'=i$, according to Lemma~\ref{lemma:coverage_rec}, we have,
    \begin{equation}\label{eq:case_i}
    \small
        \begin{aligned}
        \mathbf{E}[C_i]&=\frac{\mu^2\cdot\epsilon^2}{\sigma^2}\cdot\mathbf{E}[C_0\cdot C_1\cdots C_{i-1}]\\
        &\geq \frac{\mu^2\cdot\epsilon^2}{\sigma^2}\cdot\mathbf{E}[C_0\cdot C_1\cdots C_{i-2}]\cdot\mathbf{E}[C_{i-1}],
        % &\geq\underbrace{\frac{\mu^2\cdot\epsilon^2}{\sigma^2}\cdot\mathbf{E}[C_0\cdot C_1\cdots C_{i-2}]}_{\mathbf{E}[C_{i-1}]}\cdot\mathbf{E}[C_{i-1}],
    \end{aligned}
    \end{equation}
    considering that $C_{i-1}$ is positively correlated with $C_0\cdot C_1\cdots C_{i-2}$. 
    By the inductive hypothesis (i.e., Eq.~\eqref{eq:case_i-1}), we have,
    \begin{equation}
    \small
            \mathbf{E}[C_i]\geq\mathbf{E}[C_{i-1}]^2\geq\left(\frac{\mu^2\cdot\epsilon^2}{\sigma^2}\right)^{2^i},
    \end{equation}
    which satisfies the lower bound for $i'=i$. 
    Thus, by induction, we conclude that Lemma~\ref{lemma:coverage_level_i} holds for all $i$. 
\end{proof}

\begin{figure}[t]
    \centering
    \includegraphics[width=0.4\textwidth]{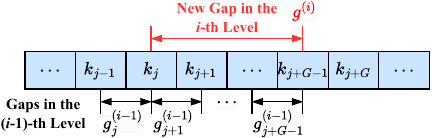}
    \caption{Illustration of gaps for the next level. Suppose $G$ is the segment coverage for the current level. The new gap in the $i$-th level is $g^{(i)}=\sum_{j'=j+1}^{j+G-1}g^{(i-1)}_{j'}$ where $g^{(i-1)}_{j'}$ is the $j'$-th gap in the $(i-1)$-th level.}  
    \label{fig:gaps_next_level}
\end{figure}

\begin{theorem}[PGM-Index Height]\label{theorem:pgm_height_new}
    Given a set $\mathcal{K}$ of $N$ sorted keys, denote the constant $G={\mu^2\epsilon^2}/{\sigma^2}$, w.h.p., the height of a PGM-Index with error parameter $\epsilon_i=\epsilon_\ell=\epsilon$ is bounded by 
    \begin{equation}\label{eq:pgm_height}
    \small
        H_{PGM}=O(\log_2\log_G N)=O(\log\log N).
    \end{equation}
\end{theorem}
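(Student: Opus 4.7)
The plan is to bound the PGM-Index height by tracking how quickly the number of items shrinks as we ascend the hierarchy, then invert the doubly-exponential growth of per-level coverage established in Lemma~\ref{lemma:coverage_level_i}. Let $N_i$ denote the number of items (keys or segments) at level $i$, so that $N_0=N$ and $N_i=N_{i-1}/C_{i-1}$. By induction on $i$, this gives $N_i = N/\prod_{j=0}^{i-1} C_j$. The height $H_{PGM}$ is the smallest index $H$ such that $N_H=1$, equivalently the smallest $H$ with $\prod_{j=0}^{H-1} C_j \geq N$. Our job therefore reduces to lower-bounding this product.

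First I would plug in the bound from Lemma~\ref{lemma:coverage_level_i}, namely $\mathbf{E}[C_j] \geq G^{2^j}$ with $G=\mu^2\epsilon^2/\sigma^2$, to obtain
\begin{equation}
\small
    \prod_{j=0}^{H-1}\mathbf{E}[C_j] \;\geq\; \prod_{j=0}^{H-1} G^{2^j} \;=\; G^{\sum_{j=0}^{H-1} 2^j} \;=\; G^{2^H-1}.
\end{equation}
Next I would solve the inequality $G^{2^H-1}\geq N$: taking logarithms base $G$ yields $2^H-1\geq \log_G N$, hence $H\geq \log_2(\log_G N + 1)$. Taking the smallest such integer $H$ gives the announced bound $H_{PGM}=O(\log_2\log_G N)=O(\log\log N)$, since $G$ is a constant once $\epsilon$ and the gap distribution are fixed.

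The non-routine step is upgrading the expectation bound to a with-high-probability statement, since the theorem is phrased w.h.p. I would argue this in two parts. At the base level, Theorem~\ref{theorem:segment_coverage} already yields the per-segment coverage w.h.p.~under the condition $\epsilon \gg \sigma/\mu$, and by a union bound over the $O(N/G)$ segments of level~0 the deviation stays controlled (a concentration inequality such as Chebyshev or Chernoff on the i.i.d.\ gaps suffices here, matching the footnote in the proof of Lemma~\ref{lemma:coverage_rec}). For the inductive step, I would observe that conditioning on the realized level-$(i{-}1)$ partition reduces each super-gap $g^{(i)}$ to a sum of $\prod_{j<i} C_j$ i.i.d.~gaps, so Theorem~\ref{theorem:segment_coverage} applies again on level~$i$; the inductive coverage bound $G^{2^i}$ then holds w.h.p.~conditionally, and a union bound over the $O(\log\log N)$ levels keeps the overall failure probability negligible.

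The main obstacle I anticipate is precisely this concentration argument across levels: the variables $C_0,\ldots,C_{H-1}$ are not independent (they share the underlying gap sequence), and a clean w.h.p.~statement requires either (i) appealing to the weakly-correlated relaxation already noted after Theorem~\ref{theorem:segment_coverage}, or (ii) conditioning sequentially on the coarser levels and invoking the theorem on the resulting super-gaps. Once this coupling is handled, the algebraic derivation above closes the proof, and the complementary bound on segment counts needed for Theorem~\ref{theorem:complexity} follows immediately by summing the geometric-in-exponent series $\sum_{i\geq 0} N/G^{2^i}=O(N/G)$.
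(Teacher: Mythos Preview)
Your proposal is correct and follows essentially the same approach as the paper: both set up the termination condition $N/\prod_{j<H}\mathbf{E}[C_j]=O(1)$, apply Lemma~\ref{lemma:coverage_level_i} to lower-bound the product by $G^{\sum_{j<H}2^j}$, and invert to obtain $H=O(\log_2\log_G N)$. Your exponent $2^H-1$ is in fact slightly sharper than the paper's $2^{H_{PGM}}$, and your discussion of the w.h.p.\ upgrade via sequential conditioning and union bounds is more explicit than the paper's, which simply defers the rigorous concentration argument to the technique of Theorem~4 in~\cite{DBLP:conf/icml/FerraginaLV20}.
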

\begin{proof}
    Here we only provide an intuitive proof sketch due to the page limit. 
    A more rigorous proof can be established by employing a similar technique as introduced in Theorem~4 of~\cite{DBLP:conf/icml/FerraginaLV20}. 
    
    According to Definition~\ref{def:pgm_construction}, the construction of a PGM-Index terminates when the current level consists of exactly one line segment (i.e., reaching the root level). 
    Intuitively, the index height $H_{PGM}$ can be solved by letting
    \begin{equation}\label{eq:height_equation_to_solve}
    \small
        \frac{N}{\prod_{i=0}^{H_{PGM}-1}\mathbf{E}[C_{i}]}=O(1).
    \end{equation}
    According to Theorem~\ref{lemma:coverage_level_i}, we have, 
    \begin{equation}
    \small
    \begin{aligned}
        \prod_{i=0}^{H_{PGM}-1}\mathbf{E}[C_{i}]\geq\prod_{i=0}^{H_{PGM}-1}G^{2^i}&\geq G^{\sum_{i=0}^{H_{PGM}-1}2^i}\geq G^{2^{H_{PGM}}}.
    \end{aligned}
    \end{equation}
    Thus, Eq.~\eqref{eq:height_equation_to_solve} can be solved by $H_{PGM}=O(\log_2\log_G N)$. 
\end{proof}

\begin{theorem}[Space and Time Complexity]\label{theorem:complexity}
    Given a set $\mathcal{K}$ of $N$ sorted keys, a PGM-Index with $\epsilon_i=\epsilon_\ell=\epsilon$ can process an index lookup query in $O(\log\log N)$ time using $O(N/G)$ space. 
\end{theorem}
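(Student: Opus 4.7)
The plan is to decompose the statement into its space and time parts, both of which fall out of Theorem~\ref{theorem:pgm_height_new} together with Lemma~\ref{lemma:coverage_level_i}. First I would record the natural per-level cost: the lookup procedure of Example~\ref{eg:pgm_query} performs exactly one linear evaluation and one error-bounded search in a window of width $2\epsilon+1$ at each level. Treating $\epsilon$ as a pre-specified constant, each such search costs $O(\log_2 \epsilon)=O(1)$ time. Combined with the height bound $H_{PGM}=O(\log_2\log_G N)$ from Theorem~\ref{theorem:pgm_height_new}, the total lookup time is $O(H_{PGM}\cdot \log_2\epsilon)=O(\log\log N)$ w.h.p., which settles the time component.

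For the space bound, I would count segments level by level. Let $N_i$ denote the expected number of segments at level $i$, with $N_0=N/\mathbf{E}[C_0]$ since each leaf segment covers $\mathbf{E}[C_0]$ raw keys. Iterating the recurrence $N_i = N_{i-1}/\mathbf{E}[C_i]$ and applying the lower bound $\mathbf{E}[C_i]\geq G^{2^i}$ from Lemma~\ref{lemma:coverage_level_i}, I obtain
\begin{equation}
N_i \;\leq\; \frac{N}{\prod_{j=0}^{i}\mathbf{E}[C_j]} \;\leq\; \frac{N}{\prod_{j=0}^{i}G^{2^j}} \;=\; \frac{N}{G^{2^{i+1}-1}}.
\end{equation}
The total segment count is then the geometric-type sum
\begin{equation}
\sum_{i=0}^{H_{PGM}-1} N_i \;\leq\; \frac{N}{G}\sum_{i=0}^{H_{PGM}-1} G^{-(2^{i+1}-2)} \;=\; O\!\left(\frac{N}{G}\right),
\end{equation}
where the series converges because $G>1$ (the regime $\epsilon\gg\sigma/\mu$ of Theorem~\ref{theorem:segment_coverage}) and the exponents grow doubly exponentially, so the leaf level dominates. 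Since each segment occupies $O(1)$ space (storing the triple $(s,a,b)$), the total memory footprint is $O(N/G)$.

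The main obstacle is mild but worth flagging: the count $N_i=N_{i-1}/\mathbf{E}[C_i]$ is an expectation-level statement, so converting it into a w.h.p.\ bound on the actual number of segments requires a concentration argument. I would address this the same way Lemma~\ref{lemma:coverage_rec} sidesteps the issue, namely by invoking Chebyshev's inequality (or a Chernoff-type tail bound under the weakly-correlated relaxation of the i.i.d.\ assumption noted after Theorem~\ref{theorem:segment_coverage}) on the sum of per-segment coverages at each level, and taking a union bound over the $O(\log\log N)$ levels. A second minor point is that the bound hides the dependence on $\epsilon$ inside the constants $G$ and $\log_2\epsilon$; I would make this dependence explicit once and then absorb it, consistent with the presentation in Table~\ref{tab:theory}.
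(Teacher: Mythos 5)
Your proposal is correct and follows essentially the same route as the paper: time is bounded as $H_{PGM}$ times the $O(\log_2(2\epsilon+1))$ per-level search cost, and space is bounded by summing $N/\prod_{j\leq i}\mathbf{E}[C_j]$ over levels using Lemma~\ref{lemma:coverage_level_i}. The only cosmetic difference is that the paper loosens $\prod_{j=0}^{h}G^{2^j}$ to $G^{h+1}$ to obtain a clean geometric series, whereas you retain the doubly exponential exponents and argue convergence directly (both yield $O(N/G)$), and you explicitly flag the expectation-to-w.h.p.\ concentration step, which the paper itself only acknowledges in an earlier footnote rather than carrying through this proof.
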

\begin{proof}
    According to Definition~\ref{def:pgm_construction} and Example~\ref{eg:pgm_query}, querying a PGM-Index requires $H_{PGM}$ times search operations, each within a range of $2\cdot\epsilon+1$. 
    According to Theorem~\ref{theorem:pgm_height_new}, the total index lookup time should be $O(H_{PGM}\cdot\log_2(2\cdot\epsilon+1))=O(\log\log N)$.

    We further analyze the space complexity of a PGM-Index, specifically the total number of line segments required to satisfy the error constraint $\epsilon$. 
    According to Definition~\ref{def:pgm_construction} and Lemma~\ref{lemma:coverage_level_i}, the $h$-th level contains at most $N/\prod_{i=0}^{h}G^{2^i}$ line segments. 
    Thus, the upper bound on the total number of segments can be derived as,
    \begin{equation}
    \small
        \begin{aligned}
            \sum_{h=0}^{H_{PGM}-1}\frac{N}{\prod_{i=0}^{h}G^{2^i}}&\leq \sum_{h=0}^{H_{PGM}-1}\frac{N}{G^{h+1}}\leq N\cdot\frac{1-\frac{1}{G^{H_{PGM}}}}{G-1}\\
                &\leq\frac{N}{G-1}=O\left({N}/{G}\right),
        \end{aligned}
    \end{equation}
    considering that $\prod_{i=0}^{h}G^{2^i}\geq \prod_{i=0}^{h}G^{2^0}\geq G^{h+1}$. 
    % \begin{equation}
    %     \begin{aligned}
    %         \sum_{h=0}^{H_{PGM}-1}\frac{N}{\prod_{i=0}^{h}C^{2^i}}&\leq \sum_{h=0}^{H_{PGM}-1}\frac{N}{C^{h+1}}\\
    %         &\leq N\cdot \frac{1-\frac{1}{C^{H_{PGM}}}}{C-1}\\
    %         &\leq \frac{N}{C-1}=O\left(\frac{N}{C}\right).
    %     \end{aligned}
    % \end{equation}
    % Thus, we finally obtain the results in Theorem~\ref{theorem:complexity}.
\end{proof}

\begin{table}[t]
\caption{PGM-Index statistics on 10 million synthetic uniform keys with different ranges.}
\label{tab:pgm_uniform_different_range}
\small
\begin{tabular}{c|ccccc}
\toprule
\cellcolor[HTML]{f4f5f6}\textbf{Key Range}          & \cellcolor[HTML]{f4f5f6}$\epsilon$ & \cellcolor[HTML]{f4f5f6}\textbf{Height} & \cellcolor[HTML]{f4f5f6}\textbf{Segments} & \cellcolor[HTML]{f4f5f6}\textbf{Memory} & \cellcolor[HTML]{f4f5f6}$\overline{Cov}$ \\\midrule
\multirow{4}{*}{\begin{tabular}[c]{@{}c@{}}$[0,10^8]$\\ $\mu=10$\\ $\sigma^2=100.19$\end{tabular}}  & 4          & 4       & 129,503 & 2,078 KiB & 77 \\
                             & 8          & 3       & 37,732  & 604 KiB   & 265 \\
                             & 16         & 3       & 10,224  & 163 KiB   & 978 \\
                             & 32         & 2       & 2,666   & 42 KiB    & 3,751 \\\midrule
\multirow{4}{*}{\begin{tabular}[c]{@{}c@{}}$[0,10^9]$\\ $\mu=100$\\ $\sigma^2=10007.7$\end{tabular}}  & 4          & 3       & 129,659 & 2,080 KiB & 77 \\
                             & 8          & 3       & 37,601  & 602 KiB   & 266   \\
                             & 16         & 3       & 10,124  & 162 KiB   & 988   \\
                             & 32         & 2       & 2,665   & 42 KiB    & 3,752 \\\midrule
\multirow{4}{*}{\begin{tabular}[c]{@{}c@{}}$[0,10^{10}]$\\ $\mu=1000$\\ $\sigma^2=999750$\end{tabular}} & 4        & 3       & 129,586 & 2,079 KiB & 77  \\
                             & 8          & 3       & 37,597  & 602 KiB   & 266  \\
                             & 16         & 3       & 10,217  & 164 KiB   & 979 \\
                             & 32         & 2       & 2,646   & 42 KiB   & 3,779 \\
\bottomrule
\end{tabular}
\end{table}

% \begin{corollary}[Tightness of Space Bound]
%     The space complexity of a PGM-Index $O(N/G)$ is tight. 
% \end{corollary}
% \begin{proof}
    
% \end{proof}

\subsection{Case Study: Uniform Keys}\label{subsec:case_uniform}
Previously, we assume that \emph{gaps} are drawn from an \emph{unkonwn} distribution. 
To further validate the correctness of our theoretical results, we now provide a case study on uniformly distributed \emph{keys}. 

Given a key set $\mathcal{K}$, assume that all keys $k\in\mathcal{K}$ are i.i.d.~samples drawn from a uniform distribution $\mathbf{U}(\alpha, \beta)$. 
In this case, the $i$-th gap on $\mathcal{K}$ can be defined as $g_i=k_{(i)}-k_{(i-1)}$ where $k_{(i)}$ and $k_{(i-1)}$ are the $i$-th and $(i-1)$-th \emph{order statistics} of $\mathcal{K}$ (i.e., the $i$-th and $(i-1)$-th smallest values in $\mathcal{K}$). 
Then, for an arbitrary $i=2,\cdots,N$, it can be shown that $g_i$ follows a beta distribution, $g_i\sim(\beta-\alpha)\cdot\mathbf{Beta}(1, N)$, with the following mean and variance,
\begin{equation}\label{eq:uniform_keys_stats}
\small
    \begin{aligned}
        \mathbf{E}[g_i] = \frac{\beta-\alpha}{N+1}, \,\,
        \mathbf{Var}[g_i] = \frac{(\beta-\alpha)^2\cdot N}{(N+1)^2\cdot(N+2)}\approx\frac{(\beta-\alpha)^2}{(N+1)^2}.
    \end{aligned}
\end{equation}
According to Eq.~\eqref{eq:uniform_keys_stats}, the constant $G=\frac{\mu^2\cdot\epsilon^2}{\sigma^2}=\epsilon^2$, which is interestingly independent of the original key distribution. 
By Theorem~\ref{theorem:pgm_height} and Theorem~\ref{theorem:complexity}, this result implies that, for uniformly distributed keys, a PGM-Index should have the \emph{same} index height and memory footprint as long as $N$ and $\epsilon$ remain unchanged.

Table~\ref{tab:pgm_uniform_different_range} reports the statistics for PGM-Indexes constructed on three synthetic uniform key sets with different ranges. 
The empirical results further validate the correctness of the aforementioned analysis, given that the index height and segment count remain consistent across different data ranges, depending solely on the value of error constraint $\epsilon$. 

\noindent\textbf{Extension to Arbitrary Key Distributions.} 
Suppose that the keys $k_1,\cdots,k_N$ are $N$ i.i.d.~random samples drawn from an \emph{arbitrary} distribution with cumulative distribution function $F(x)$ and density function $f(x)$. 
As the $i$-th gap $g_i=k_{(i)}-k_{(i-1)}$, the distribution characteristics like mean and variance of $g_i$ can be derived by evaluating the joint density function $f_{k_{(i-1)},k_{(i)}}(x,y)$ of two consecutive order statistics $k_{(i-1)}$ and $k_{(i)}$~\cite{chung2000course}. 
For example, the expectation $\mathbf{E}[g_i]$ can be derived as,
\begin{equation}\label{eq:general_gap_distribution}
\small
    \begin{aligned}
        \mathbf{E}[g_i]&=\iint (y-x)\cdot f_{k_{(i-1)},k_{(i)}}(x,y) dxdy,\\
        f_{k_{(i-1)},k_{(i)}}(x,y)&=\frac{N!\cdot F(x)^{i-2}(1-F(y))^{N-i}f(x)f(y)}{(i-2)!(N-i)!}.
    \end{aligned}
\end{equation} 
Notably, in most cases, no closed-form solution exists for Eq.~\eqref{eq:general_gap_distribution}. 
Thus, an empirical CDF (ECDF) based on random sampling can be applied to obtain a provably accurate approximation according to the DKW~bound~\cite{dvoretzky1956asymptotic}. 

\section{Why Are PGM-Indexes Ineffective?}\label{sec:ineffective}
The theoretical findings in Section~\ref{sec:theory} reveal that PGM-Indexes can achieve the best space-time trade-off among existing learned indexes. 
However, according to recent benchmarks~\cite{DBLP:journals/pvldb/MarcusKRSMK0K20,DBLP:journals/pvldb/WongkhamLLZLW22}, an optimized RMI~\cite{DBLP:conf/sigmod/KraskaBCDP18} consistently outperforms the PGM-Index by 20\%--40\%. 
Motivated by this, in this section, we aim to answer another critical question: Why do PGM-Indexes underperform in practice? 

\noindent\textbf{A Simple Cost Model.} We begin by introducing a simplified cost model for an arbitrary $(\epsilon_i,\epsilon_\ell)$-PGM-Index. 
Recalling the PGM-Index structure as shown in Figure~\ref{fig:pgm}, the total index lookup time for a search key $k$ can be modeled as the summation of the internal search cost with error constraint $\epsilon_i$ and the last-mile search cost with error constraint $\epsilon_\ell$, i.e.,
\begin{equation}\label{eq:pgm_cost_model}
    \begin{aligned}
        Cost &= Cost_{\text{internal}} + Cost_{\text{last-mile}} \\
        &= (H_{PGM}-1)\cdot C_{S}(\epsilon_i) + C_{S}(\epsilon_\ell) + H_{PGM}\cdot C_{L},
    \end{aligned}
\end{equation}
where $H_{PGM}$ is the index height, $C_S(\epsilon)$ represents the search cost within the range of $2\cdot\epsilon+1$, and $C_L$ is the overhead to evaluate a linear function $y=a\cdot x+b$. 

\begin{figure}[t]
    \centering
    \includegraphics[width=0.36\textwidth]{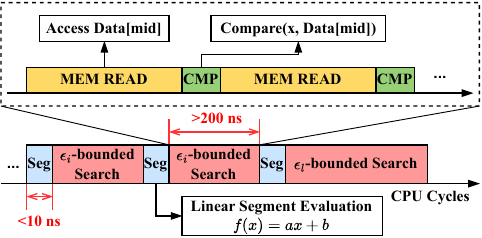}
    \caption{Illustration of the CPU cycles used for searching an $(\epsilon_i,\epsilon_\ell)$-PGM-Index with a standard binary search algorithm for the internal error-bounded search operation.}
    \label{fig:pgm_cpu}
\end{figure}

\noindent\textbf{Bottleneck: Error-bounded Search.} According to Theorem~\ref{theorem:pgm_height}, the index height $H_{PGM}=O(\log\log N)$, implying that very few internal searches are required (generally fewer than \textbf{5} for 1 billion keys). 
Additionally, as depicted in Figure~\ref{fig:pgm_cpu}, our benchmark results across various datasets and platforms indicate that evaluating a linear function typically takes \textbf{less than 10 ns}. 
In contrast, performing a search with $\epsilon=64$ takes time \textbf{more than 200 ns} by adopting a standard binary search implementation (e.g., \texttt{std::lower\_bound}). 
Based on this observation, the cost model in Eq.~\eqref{eq:pgm_cost_model} can be simplified by neglecting the segment evaluation overhead, i.e.,
\begin{equation}\label{eq:pgm_cost_model_revised}
    Cost \approx (H_{PGM}-1)\cdot C_{S}(\epsilon_i) + C_{S}(\epsilon_\ell).
\end{equation}

The revised cost model reveals that searching a PGM-Index is dominated by performing $H_{PGM}$ times error-bounded searches, which are generally known as \emph{memory-bound} operations~\cite{DBLP:journals/cacm/WilliamsWP09}. 
As depicted in Figure~\ref{fig:pgm_cpu}, an $\epsilon$-bounded binary search typically involves $\lceil\log_2 (2\cdot\epsilon+1)\rceil$ comparisons and memory accesses. 
Each comparison generally requires a few nanoseconds, whereas each memory access, if cache missed, can take approximately 100 nanoseconds due to the asymmetric nature of the memory hierarchy. 

\begin{figure}[t]
    \centering
    \includegraphics[width=0.34\textwidth]{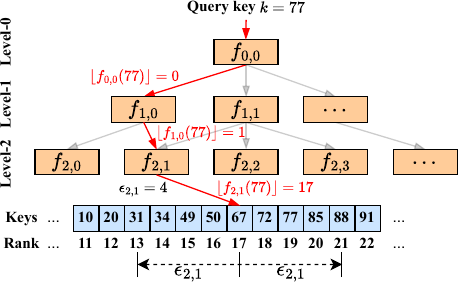}
\caption{Illustration of a 3-layer RMI~\cite{DBLP:conf/sigmod/KraskaBCDP18}. $f_{i,j}$ denotes the $j$-th model in the $i$-th layer. The path in red denotes the index traversal from the root model $f_{0,0}$. Notably, the root level is specified as level-0, opposite to the PGM-Index. }
\label{fig:rmi}
\end{figure}

\noindent\textbf{Comparison to RMI.} 
We then investigate why RMI practically outperforms the PGM-Index. 
As illustrated in Figure~\ref{fig:rmi}, the major structural difference between RMI and PGM-Index lies in their internal search mechanisms. 
For RMI, the model prediction $f_{i,j}(k)$ directly serves as the \emph{model index} for the next level (i.e., the $(i+1)$-th level), thereby bypassing the costly internal error-bounded search used in PGM-Index. 
To ensure lookup correctness, the models in the bottom layer materialize the \emph{maximum search error} to perform a last-mile error-bounded search, similar to the PGM-Index.

Table~\ref{tab:rmi_vs_pgm_details} presents the detailed overheads when querying RMI and PGM-Index. 
Consistent with previous benchmark results~\cite{DBLP:journals/pvldb/MarcusKRSMK0K20}, an optimized RMI implementation~\cite{rmi} outperforms the PGM-Index in terms of total index lookup time across all datasets. 
Specifically, for PGM-Index, the internal search time $T_i$ accounts for \textbf{69\%--81\%} of the total index lookup overhead; 
in contrast, for RMI, this ratio is as low as \textbf{19\%--27\%}, supporting our earlier claim. 

% pgm 69% 69% 81% 79% 
% rmi 23% 23% 27% 19%

\noindent\textbf{Is RMI the Best Choice?}
While RMI generally outperforms the PGM-Index, its design poses a critical limitation: RMI is hard to guarantee a maximum error before index construction, making its performance highly \emph{data-sensitive}. 
As shown in Table~\ref{tab:rmi_vs_pgm_details}, RMI's maximum error ranges from \textbf{63} to $\mathbf{3.1\times 10^5}$, resulting in high \emph{worst-case} last-mile search overhead. 
Such ``unpredictability'' also raises the challenge of building an accurate cost model for RMI-like indexes, which is crucial for practical DBMS to perform effective cost-based query optimization~\cite{DBLP:journals/csur/JarkeK84}. 
Moreover, given the identified bottleneck in querying a PGM-Index, a natural idea is to accelerate the costly internal error-bounded search operation. 
In Section~\ref{sec:optimization}, we demonstrate how a simple hybrid branchless search strategy can make the ``ineffective'' PGM-Index outperform RMI. 

\section{PGM++: Optimization to PGM-Index}\label{sec:optimization}
This section introduces PGM++, a \emph{simple yet effective} variant of the PGM-Index by incorporating a hybrid error-bounded search strategy ($\vartriangleright$~Section~\ref{subsec:search}) and an automatic parameter tuner based on well-calibrated cost models ($\vartriangleright$~Section~\ref{subsec:cost_model}). 
% We first propose a hybrid search strategy to replace the standard binary search used in PGM-Indexes (Section~\ref{subsec:search}). 
% Then, we establish a cost model to estimate the space and time overhead of PGM-Index (Section~\ref{subsec:cost_model}). 
% Guided by the cost model, we further design a set of parameter tuning strategies to properly set $\epsilon_i$ and $\epsilon_\ell$ (Section~\ref{subsec:parameter_tuning}). 

\subsection{Hybrid Search Strategy}\label{subsec:search}
As the error-bounded search operation is identified as the bottleneck in querying the PGM-Index, our optimized structure, named PGM++, employs a \emph{hybrid search strategy} to replace the standard binary search. 
To start, we discuss the impact of branch misses in standard binary search implementation. 

\noindent\textbf{Branch Prediction and Branch Miss.} 
Modern CPUs rely on sophisticated branch predictors to enhance pipeline parallelism by forecasting the outcomes of conditional jump instructions (e.g., \texttt{JLE} and \texttt{JAE} instructions in the \texttt{X86} architecture). 
These predictors are highly effective for simple, repetitive tasks such as for loops or pointer chasing, where the pattern of execution is predictable~\cite{hennessy2011computer}. 
However, in the case of standard binary search implementations, such as the widely used \texttt{std::lower\_bound}, branching exhibits a \emph{random pattern}, leading to a high branch miss rate of approximately \textbf{50\%}~\cite{DBLP:journals/pvldb/SchulzBS18}.
As depicted in Figure~\ref{fig:cpu_pipeline}(a), a branch miss stalls the entire CPU pipeline until the branch condition is resolved (e.g., the comparison \texttt{d[mid]>=k} in line~5 of function \texttt{lower\_bound}).

% To fully benefit from the hardware pipeline parallelism, modern CPUs are equipped with sophisticated branch predictors to guess whether a conditional jump (e.g., \texttt{JLE}/\texttt{JAE} instructions on \texttt{x86}) will be executed or not. 
% Pipeline efficiency can be guaranteed if the branch predictor makes accurate predictions, which typically holds on simple workloads like \texttt{for} loops or pointer chasing~\cite{hennessy2011computer}. 
% However, it is generally known that the standard binary search implementation like \texttt{std::lower\_bound} in C++ exhibits a \emph{fully random} branch pattern, yielding a high branch missing rate ($\approx \mathbf{50\%}$)~\cite{DBLP:journals/pvldb/SchulzBS18}. 

\noindent\textbf{Branchless Binary Search.} 
A simple optimization~\cite{DBLP:journals/pvldb/SchulzBS18} to the standard binary search is to \emph{remove} the branches by conditional move instructions (e.g., \texttt{CMOV} on \texttt{X86} and \texttt{MOVGE} on \texttt{ARM}), which allow \emph{both} sides of a branch to execute and keeps the valid one based on the evaluated condition.
% \footnote{On \texttt{ARM}, most instructions are \emph{conditional} like \texttt{MOVGE}~\cite{arm-asm}, serving a similar purpose.}
As illustrated in   Figure~\ref{fig:cpu_pipeline}(b), eliminating branches (function \texttt{lower\_bound\_brl}) maximizes the CPU pipeline utilization, yielding up to a \textbf{51\%} reduction in total search time. 
Notably, \texttt{CMOV} is not the ``silver bullet'' as it disables the native branch predictor and incurs extra overhead due to its intrinsic complexity. 
On large datasets (>LLC size), the performance gap between branchy and branchless searches diminishes as the memory access latency dominates the total overhead. 
However, such extra overhead is \emph{negligible} particularly when the search range fits within the L2 cache, making \texttt{CMOV} performance-worthy in PGM-Index (usually $\epsilon\leq 1024$). 

% Thus, in some cases, eliminating all branches may be even more costly. 

\begin{table}[t]
\caption{Query processing details. For PGM-Index, $\epsilon_i=16$ and $\epsilon_\ell=32$. For RMI, we adopt CDFShop~\cite{DBLP:conf/sigmod/MarcusZK20} to find an optimized RMI structure with a comparable space to the PGM-Index. }
\label{tab:rmi_vs_pgm_details}
\small
\begin{tabular}{c|cccccc}
\toprule
\cellcolor[HTML]{f4f5f6}\textbf{Data}       & \cellcolor[HTML]{f4f5f6}\textbf{Index} & \multicolumn{1}{c}{\cellcolor[HTML]{f4f5f6}\textbf{Size}} & \cellcolor[HTML]{f4f5f6}\textbf{\begin{tabular}[c]{@{}c@{}}Max\\ Err.\end{tabular}} & \cellcolor[HTML]{f4f5f6}\textbf{\begin{tabular}[c]{@{}c@{}}Internal\\ Time\end{tabular}} & \cellcolor[HTML]{f4f5f6}\textbf{\begin{tabular}[c]{@{}c@{}}Last-mile\\ Time\end{tabular}} & \cellcolor[HTML]{f4f5f6}\textbf{Total} \\\midrule
\multirow{2}{*}{\texttt{fb}} & \texttt{PGM} & 16.1 MB & \textbf{32} & 675 ns & \textbf{300 ns} & 975 ns \\
                       & \texttt{RMI} & 24.0 MB & 568 & \textbf{185 ns} & 614 ns & \textbf{799 ns} \\\midrule
\multirow{2}{*}{\texttt{wiki}}  & \texttt{PGM} & 1.3 MB & \textbf{32} & 606 ns & \textbf{270 ns} & 876 ns \\
                       & \texttt{RMI} & 1.0 MB & 63 & \textbf{95 ns} & 317 ns & \textbf{412 ns} \\\midrule
\multirow{2}{*}{\texttt{books}} & \texttt{PGM} & 37.6 MB & \textbf{32} & 887 ns & \textbf{208 ns} & 1095 ns \\
                       & \texttt{RMI} & 40.0 MB & 302 & \textbf{159 ns} & 429 ns & \textbf{588 ns} \\\midrule
\multirow{2}{*}{\texttt{osm}} & \texttt{PGM} & 44.4 MB & \textbf{32} & 824 ns & \textbf{212 ns} & 1036 ns \\
                       & \texttt{RMI} & 96.0 MB & 311K & \textbf{146 ns} & 636 ns & \textbf{782 ns} \\\bottomrule
\end{tabular}
\end{table}

\begin{figure}[t]
    \centering
    \vspace{-2ex}
    \includegraphics[width=0.45\textwidth]{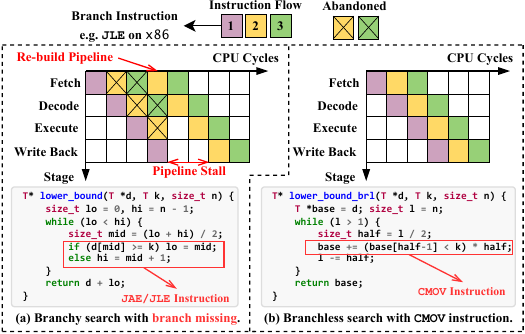}
    \caption{Illustration of the CPU pipeline status for executing (a) standard binary search (\texttt{std::lower\_bound}) and (b) branchless binary search enabled by \texttt{CMOV} instruction. }
    \label{fig:cpu_pipeline}
\end{figure}

\begin{figure}[t]
     \centering
     \begin{subfigure}[b]{0.2\textwidth}
         \centering
         \includegraphics[width=\textwidth]{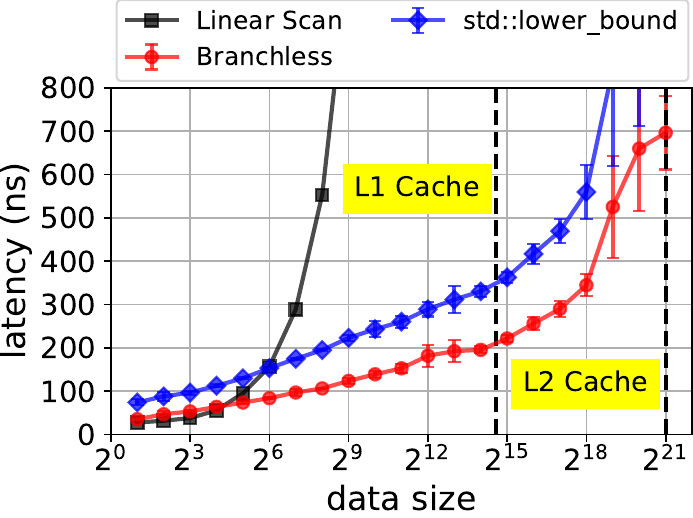}
         \caption{Platform \texttt{ARM}.}
         \label{fig:search_arm}
     \end{subfigure}
     \begin{subfigure}[b]{0.2\textwidth}
         \centering
         \includegraphics[width=\textwidth]{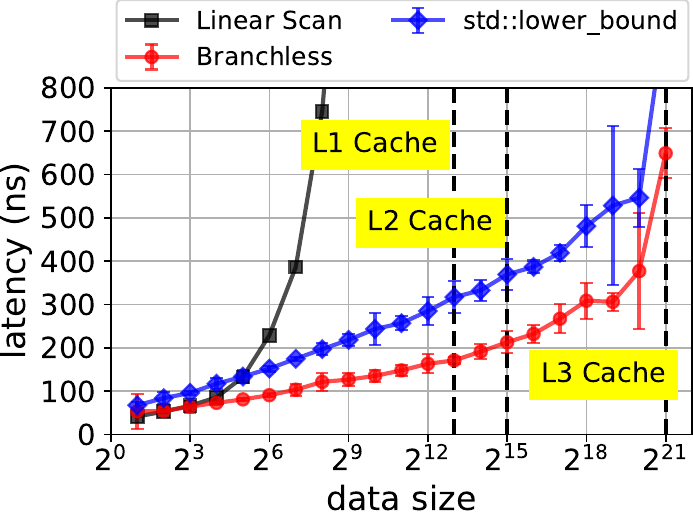}
         \caption{Platform \texttt{X86-1}.}
         \label{fig:search_x86}
     \end{subfigure}
     \vspace{1pt}
        \caption{Latency w.r.t.~data size for linear search, binary search (\texttt{std::lower\_bound}), and branchless binary search.}
        \label{fig:bench_search_results}
\end{figure}

\noindent\textbf{Benchmark Results.} 
Figure~\ref{fig:bench_search_results} presents the benchmark result for branchy binary search (\texttt{std::lower\_bound} from \texttt{STL}), branchless binary search (similar to \texttt{lower\_bound\_brl} in Figure~\ref{fig:cpu_pipeline}(b)), and linear scan, tested on synthetic \texttt{uint64\_t} key sets of varying sizes. 
The results indicate that, on both \texttt{ARM} and \texttt{X86} platforms, branchless search demonstrates superior performance across a wide range of data sizes, excluding very small sets (e.g., $N\leq 16$), where the linear scan is more efficient. 
Compared to \texttt{std::lower\_bound}, our branchless search implementation achieves a performance improvement of approximately $\mathbf{1.2\times}$ to $\mathbf{1.6\times}$.

It is noteworthy that other search algorithms, like k-ary search and interpolation search~\cite{DBLP:journals/pvldb/SchulzBS18}, are not included in this comparison. 
This is because, the search range in the PGM-Index is typically small (e.g., $\mathbf{\epsilon\leq 1024}$), where more advanced search algorithms do not \emph{consistently} outperform a branchless binary search. 
Additionally, we do not consider architecture-aware optimizations like SIMD and memory pre-fetching, as our work is intended to provide a detailed theoretical and experimental \emph{revisit} of the original PGM-Index~\cite{DBLP:journals/pvldb/FerraginaV20}. 
The simple hybrid search strategy, as described below, is sufficient to showcase the potential of PGM-Index. 

\noindent\textbf{Hybrid Search.} 
Based on the above discussion and benchmark results, our PGM++ adopts the following \emph{hybrid search} operator: 
\begin{equation*}
\small
    \mathsf{hybrid\_search} = 
    \begin{cases}
        \mathsf{linear\_scan} & \text{ if Search Range } \leq \delta\\
        \mathsf{lower\_bound\_brl} & \text{ if Search Range } > \delta
    \end{cases}
\end{equation*}
where $\delta$ is a threshold to switch to linear search (8 on \texttt{ARM}/\texttt{X86-1}, and 16 on \texttt{X86-2}). 
Then, as illustrated in Figure~\ref{fig:hybrid_search_eg}, PGM++ processes an index lookup query as follows. 
\textbf{Step} \ding{182}: Starting from the root layer, identify the layer $l$ where the \emph{next} layer's segment count exceeds $\delta$ and skip all the layers before $l$. 
\textbf{Step} \ding{183}: Starting from layer $l$, perform internal searches (using \textsf{hybrid\_search}) with error bound $\epsilon_i$ until reaching the bottom layer. 
\textbf{Step} \ding{184}: Perform last-mile search on sorted keys (using \textsf{hybrid\_search}) with error bound $\epsilon_\ell$. 
Notably, the specific search strategy for each layer can be determined at compile time, without introducing any extra runtime overhead. 

Recalling our theoretical findings in Section~\ref{sec:theory}, the height of PGM-Index grows at a sub-logarithmic rate of $O(\log\log N)$, leading to an extremely \emph{flat} hierarchical structure where the \emph{non-bottom} layers contain very few line segments. 
Due to the structural invariance of PGM-Index (as defined in Definition~\ref{def:pgm_construction}), instead of recursively searching from the root, PGM++ skips all layers until reaching the first layer whose next layer is considered \emph{dense} (segment count $>\delta$). 
This strategy, outlined in Step~\ding{182}, effectively reduces search overhead, particularly in a \emph{cold-cache} environment.

\begin{figure}[t]
    \centering
    \includegraphics[width=0.38\textwidth]{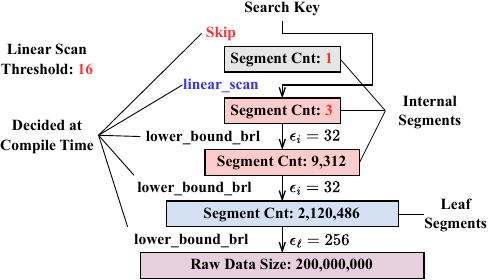}
    \caption{A toy example of the hybrid search strategy.}
    \label{fig:hybrid_search_eg}
\end{figure}

% \noindent\textbf{Performance Preview.} 
% Here we highlight the key results from the benchmark in Section~\ref{sec:exp}. 
% By adopting the hybrid search strategy, across \emph{all platforms}, our PGM++ \emph{consistently outperforms} the original PGM-Index and RMI by up to $\mathbf{1.72\times}$ and \textbf{\textcolor{red}{X}}. 
% Moreover, as PMG++ skips unnecessary searching from the root, such speedup can be more significant in the cold-cache environment where memory accessing dominates the total overhead. 

\subsection{Calibrated Cost Model}\label{subsec:cost_model}
To efficiently and effectively determine the error bounds for internal search ($\epsilon_i$) and last-mile search ($\epsilon_\ell$), we first develop cost models that estimate the space and time overheads without the need for physically constructing the PGM-Index. 

\noindent\textbf{Space Cost Model.} 
According to Section~\ref{subsec:motivation_exp} and Table~\ref{tab:index_height_experiment}, the space overhead of a PGM-Index is predominantly determined by the number of segments in the bottom layer (denoted as $L$), which accounts for up to $\mathbf{>99.9\%}$ of the total space cost. 
Therefore, to simplify the space cost model, we focus solely on the leaf segments and ignore the internal segments. 
According to the results in~\cite{DBLP:conf/icml/FerraginaLV20} (i.e., Theorem~\ref{theorem:segment_coverage}), $L\propto N\sigma^2/\epsilon_\ell^2\mu^2$, where $\mu$ and $\sigma^2$ refer to the mean and variance of \emph{gaps} on the input sorted keys, respectively. 

However, as discussed in Section~\ref{sec:benchmark_setting}, this estimation, which relies on the \emph{global} gap distribution, is often too \emph{coarse} for practical datasets due to the inherent heterogeneity in gap distributions. 
To develop a more fine-grained cost model, we partition the gaps into a set of consecutive and disjoint chunks $\mathcal{P}$ (with $\sum_{P\in\mathcal{P}}|P|=N$) by using a kernel-based change-point detection algorithm~\cite{DBLP:journals/jmlr/ArlotCH19}. 
Assuming that gaps are identically distributed within each partition $P\in\mathcal{P}$, the refined estimator for $L$ becomes: 
\begin{equation}\label{eq:space_cost}
\small
    L(\epsilon_\ell) \propto \sum\nolimits_{P\in\mathcal{P}} {N_P\sigma_{P}^2}/{\epsilon_\ell^2\mu_{P}^2},
\end{equation}
where $N_P$, $\mu_P$, and $\sigma^2_P$ represent the size, mean, and variance of gaps within partition $P\in\mathcal{P}$, respectively. 
The total space cost of an $(\epsilon_i,\epsilon_\ell)$-PGM-Index is then given by $M=L(\epsilon_\ell)\cdot\mathsf{sizeof}(seg)$, where $\mathsf{sizeof}(seg)$ is the number of bytes required to encode a line segment $seg=(s, a, b)$. 
Typically, $\mathsf{sizeof}(seg)=24$ for \texttt{uint64\_t} keys and \texttt{double} slope and intercept. 

\noindent\textbf{Time Cost Model.}
% We further devise a time cost model to estimate the lookup overhead on a PGM-Index by extending the simple cost model introduced in Eq.~\eqref{eq:pgm_cost_model_revised}. 
According to the discussions in Section~\ref{sec:ineffective}, the majority of the index lookup overhead comes from the recursively invoked error-bounded search operations. 
As we adopt a hybrid search strategy, the simplified cost model introduced in Eq.~\eqref{eq:pgm_cost_model_revised} can be further refined as follows,
\begin{subequations}
\small
\begin{align}
    Cost(\epsilon_i, \epsilon_\ell) &= Cost_{\text{internal}} + Cost_{\text{last-mile}}\label{eq:time_cost_total}\\
    Cost_{\text{last-mile}} &= \lceil\log_2(2\cdot\epsilon_\ell +1)\rceil\cdot C_{\text{miss}} \label{eq:last_mile_cost}\\
    Cost_{\text{internal}} &= (H_{PGM}-1)\cdot (C_S(\epsilon_i)+C_{\text{segment}})\label{eq:internal_cost_overall}\\
    C_S(\epsilon_i) &= \begin{cases}
        C_{\text{linear}} & \text{if } 2\cdot\epsilon_i+1\leq\delta\\
        \lceil\log_2(2\cdot\epsilon_i+1)\rceil\cdot C_{\text{hit}} & \text{if } 2\cdot\epsilon_i+1>\delta
    \end{cases} \label{eq:internal_cost}\\
    H_{PGM} &\propto \log_2\log_{\mu^2\epsilon_i^2/\sigma^2} L(\epsilon_\ell)\label{eq:pgm_height_est}
\end{align}
\end{subequations}
where (a) constants $C_{\text{miss}}$ and $C_{\text{last-mile}}$ are the memory access costs when missing or hitting L1/L2 cache; (b) constant $C_{\text{segment}}$ refers to the overhead of evaluating a linear function $y=a\cdot x + b$; (c) constant $C_{\text{linear}}$ is the cost of performing a linear search within the range of $2\cdot\epsilon_i+1$; and (d) $L(\epsilon_\ell)$ is the count of leaf segments estimated by Eq.~\eqref{eq:space_cost}. 
Notably, all constants in the cost model are estimated by probe datasets for each platform. 

In Eq.~\eqref{eq:last_mile_cost}, we assume a \emph{cold-cache} environment, as the raw key set $\mathcal{K}$ is large enough and the access to $\mathcal{K}$ is too random for hardware prefetchers to be effective. 
Conversely, in Eq.~\eqref{eq:internal_cost}, we assume a \emph{hot-cache} environment, since the non-bottom layers contain very few segments, making it highly likely for these segments to be cache-resident after processing a few queries. 
It is noteworthy that when index data can be well-cached by the CPU, the segment computation overhead becomes non-negligible. 
That is why Eq.~\eqref{eq:internal_cost_overall} includes an additional term $C_{\text{segment}}$.

\noindent\textbf{PGM-Index Parameter Tuning.} 
With the space and time cost models, the two error parameters, $\epsilon_i$ and $\epsilon_\ell$, can be automatically configured by minimizing the potential lookup cost while satisfying a pre-specified space constraint. 
\textbf{Step}~\ding{182}: Given a \emph{rough} index storage budget $B$, according to Eq.~\eqref{eq:space_cost}, $\epsilon_\ell$ can be estimated by
\begin{equation}\label{eq:eps_ell_opt}
\small
    \widetilde{\epsilon_\ell} = \sqrt{\frac{\mathsf{sizeof}(seg)}{B}\sum\nolimits_{P\in\mathcal{P}}{N_P\sigma_P^2}/{\mu_P^2}}.
\end{equation}
\textbf{Step}~\ding{183}: With a determined $\epsilon_\ell=\widetilde{\epsilon_\ell}$, $\epsilon_i$ can be derived by minimizing the index search overhead as formulated in Eq.~\eqref{eq:time_cost_total}, i.e.,
\begin{equation}\label{eq:eps_i_opt}
\small
    \widetilde{\epsilon_i} = {\arg\min}_{\epsilon_i\in\mathcal{E}} Cost(\epsilon_i, \widetilde{\epsilon_\ell}),
\end{equation}
where $\mathcal{E}$ is the set of possible values for $\epsilon_i$ ($\mathcal{E}=\{2^j|j=2,\cdots,10\}$ in our implementation). 
Intuitively, to minimize Eq.~\eqref{eq:time_cost_total}, $\epsilon_i$ should neither be too large nor too small. 
According to the cost model, a larger $\epsilon_i$ increases the overhead of $C_S(\epsilon_i)$ in Eq.~\eqref{eq:internal_cost}, while a smaller $\epsilon_i$ results in more layers to traverse (i.e., $H_{PGM}$ in Eq.~\eqref{eq:pgm_height_est}). 
Notably, although Eq.~\eqref{eq:eps_i_opt} has an analytical solution by solving $\frac{\partial Cost}{\partial \epsilon_i}=0$, in practice, we simply enumerate all possible $\epsilon_i\in\mathcal{E}$ to find the optimal value, as $\mathcal{E}$ is typically a small set ($|\mathcal{E}|<10$). 

Figure~\ref{fig:pgm_eps_tuning} reports the results of the observed index lookup costs w.r.t.~different values of $\epsilon_i$ and $\epsilon_\ell$. 
When fixing $\epsilon_\ell$, the time cost w.r.t.~$\epsilon_i$ exhibits a ``U''-shaped pattern, consistent with our earlier analysis based on the established cost model.

\noindent\textbf{Takeaways.} 
Our cost model for PGM++ can be easily extended to \emph{any} PGM-Index variants like~\cite{zhang2024making,DBLP:journals/pvldb/FerraginaV20}. 
In contrast to existing cost models for learned indexes (mostly based on RMI) like~\cite{DBLP:journals/pvldb/ZhangG22}, our cost model is \emph{workload-independent}, relying solely on gap distribution characteristics and platform-aware cost constants. 
These features enhance the robustness of parameter tuning, as the cost is optimized for \emph{all} queries rather than being tailored to a specific workload. 

\begin{figure}[t]
     \centering
     \begin{subfigure}[b]{0.21\textwidth}
         \centering
         \includegraphics[width=\textwidth]{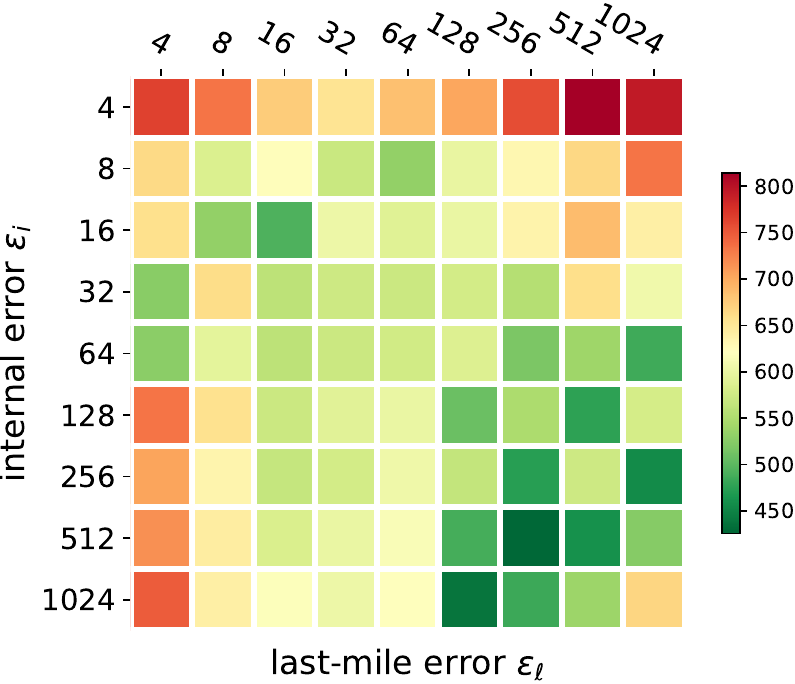}
     \end{subfigure}
     \begin{subfigure}[b]{0.21\textwidth}
         \centering
         \includegraphics[width=\textwidth]{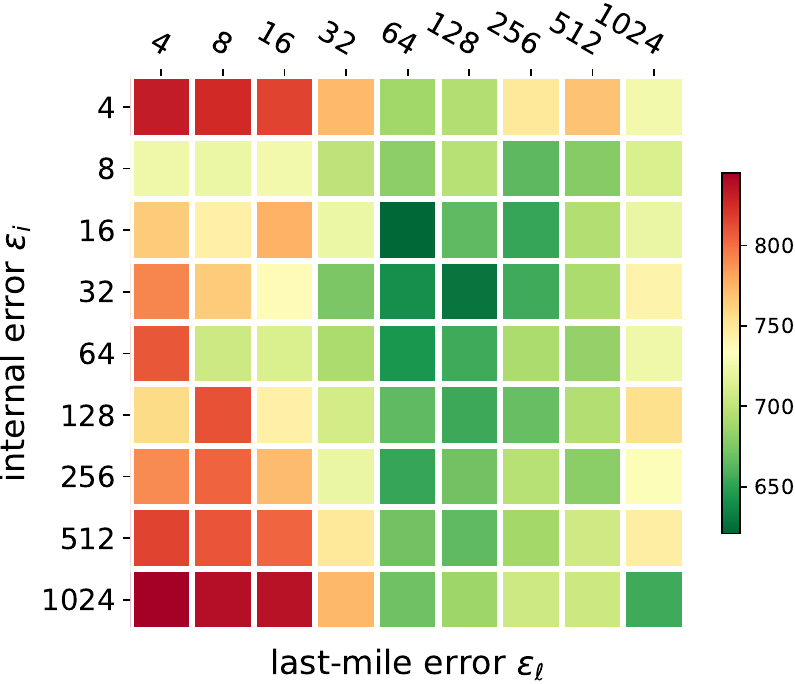}
     \end{subfigure}

     \begin{subfigure}[b]{0.21\textwidth}
         \centering
         \includegraphics[width=\textwidth]{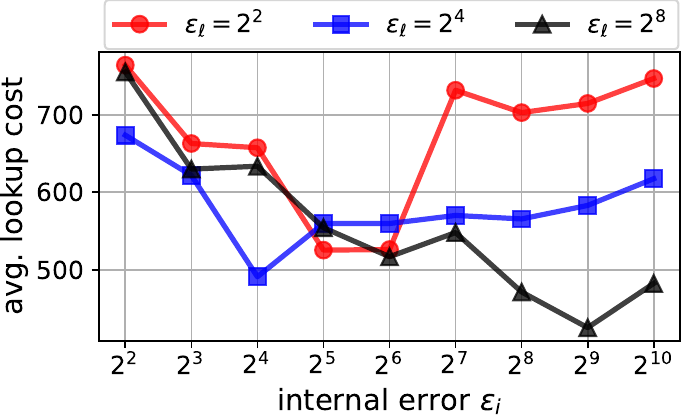}
         \caption{Lookup cost on \texttt{wiki}.}
         \label{fig:wiki_eps_tuning}
     \end{subfigure}
     \begin{subfigure}[b]{0.21\textwidth}
         \centering
         \includegraphics[width=\textwidth]{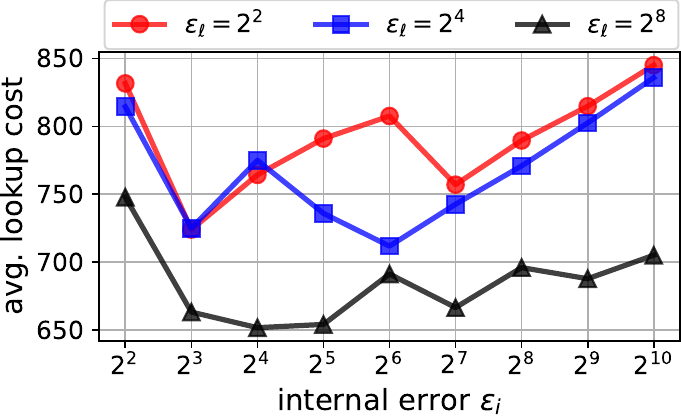}
         \caption{Lookup cost on \texttt{books}.}
         \label{fig:book_eps_tuning}
     \end{subfigure}
     \vspace{1ex}
        \caption{Observed index lookup overhead (unit: ns) of PGM++ on \texttt{x86-1} w.r.t.~different combinations of $(\epsilon_i, \epsilon_\ell)$. }
        \label{fig:pgm_eps_tuning}
\end{figure}

\begin{figure*}[t]
     \centering
     \begin{subfigure}[b]{0.98\textwidth}
         \centering
         \includegraphics[width=\textwidth]{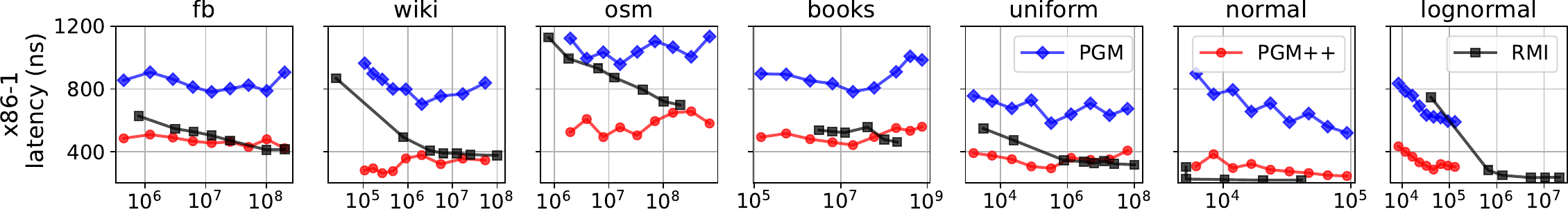}
     \end{subfigure}
     \begin{subfigure}[b]{0.98\textwidth}
         \centering
         \includegraphics[width=\textwidth]{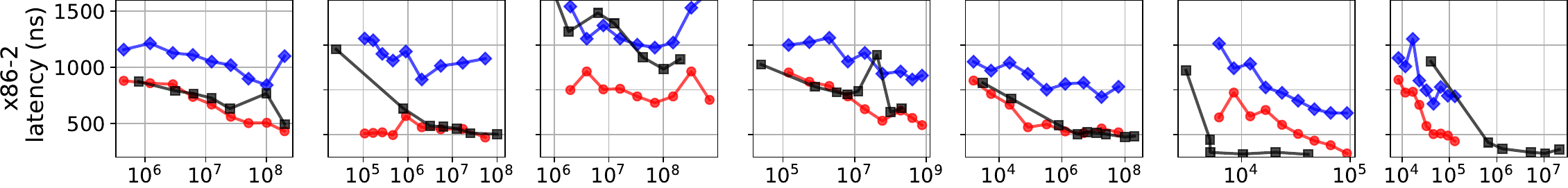}
     \end{subfigure}
     \begin{subfigure}[b]{0.98\textwidth}
         \centering
         \includegraphics[width=\textwidth]{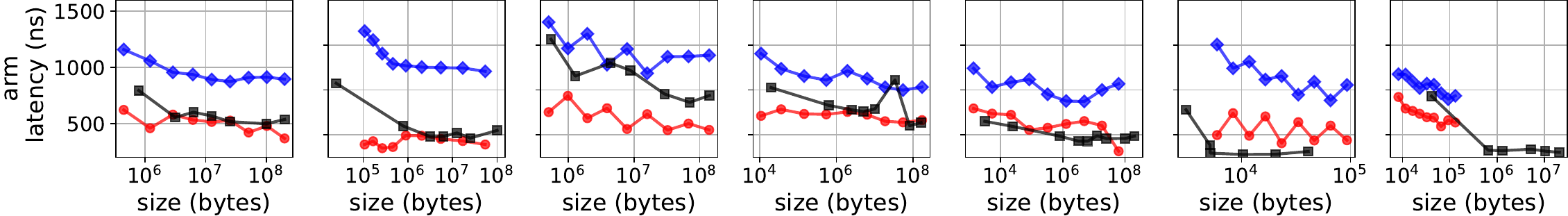}
     \end{subfigure}
     \vspace{1pt}
        \caption{Space and time tradeoffs for seven datasets on three platforms (workload: \textsf{Uniform}).}
        \label{fig:overall_evaluation}
        \vspace{-2ex}
\end{figure*}
\section{Experimental Study}\label{sec:exp}
In this section, we present the major benchmark results to answer the vital question that whether PGM++ is capable of reversing the ``ineffective'' scenario of PGM-Indexes. 
The experimental setups have been detailed in Section~\ref{sec:benchmark_setting}. 

% \noindent\ding{182} ? ($\vartriangleright$ Section~\ref{subsec:exp_overall})

% \noindent\ding{183} Whether the cost model can precisely depict the true performance of PGM-Indexes and whether the automatic parameter tuning strategy can pick error parameters to minimize the search cost? ($\vartriangleright$ Section~\ref{subsec:exp_cost_model}) 

\subsection{Overall Evaluation}\label{subsec:exp_overall}
\textbf{Baseline and Implementation.} 
We implement and evaluate three learned indexes: \ding{182} \textsf{RMI}, the optimized recursive model index~\cite{DBLP:conf/sigmod/KraskaBCDP18,DBLP:journals/pvldb/MarcusKRSMK0K20}, \ding{183} \textsf{PGM}, the original PGM-Index implementation~\cite{DBLP:journals/pvldb/FerraginaV20,pgm}, and \ding{184} \textsf{PGM++}, our optimized PGM-Index variant. 
For \textsf{RMI}, we adopt CDFShop~\cite{DBLP:conf/sigmod/MarcusZK20} to produce a set of optimal RMI configurations under various index sizes. 
For \textsf{PGM}, we construct $9\times9$ PGM-Indexes with $(\epsilon_i,\epsilon_\ell)\in\mathcal{E}\times\mathcal{E}$ and $\mathcal{E}=\{2^2,\cdots,2^{10}\}$. 
Then, for each $\epsilon_\ell\in\mathcal{E}$, the fastest PGM-Index is reported. 
Similarly, for \textsf{PGM++}, we adopt the PGM-Index configuration tuned by cost models (Section~\ref{subsec:cost_model}) for each $\epsilon_\ell\in\mathcal{E}$. 
For \textsf{PGM} and \texttt{PGM++}, according to Eq.~\eqref{eq:space_cost}, each $\epsilon_\ell$ corresponds to an index storage budget. 

We do not consider other PGM or RMI variants, such as the cache-efficient RMI~\cite{DBLP:journals/pvldb/ZhangG22} or the IO-efficient PGM-Index~\cite{zhang2024making}. 
This is because this work \emph{primarily} aims at exploring the theoretical aspect and performance bottlenecks inherent in the PGM-Index.
Our findings, however, possess a broader applicability, as they can be generalized to \emph{any} PGM-like indexes. 
This study also excludes non-learned baselines like B+-tree variants as they have been extensively compared in previous learned index benchmarks like~\cite{DBLP:journals/pvldb/MarcusKRSMK0K20,DBLP:journals/pvldb/WongkhamLLZLW22}. 

\noindent\textbf{Overall Evaluation.} 
Figure~\ref{fig:overall_evaluation} presents the trade-offs between index lookup overhead and storage cost across all seven datasets and three platforms on \textsf{Uniform} query workloads. 
The results show that, in terms of index lookup time, \textsf{PGM++} consistently outperforms \textsf{PGM} by a factor of $\mathbf{1.2\times}\sim\mathbf{2.2\times}$ with the same index size, supporting our bottleneck analysis for PGM-Indexes (Section~\ref{sec:ineffective}). 
In contrast to the optimized \textsf{RMI}, our \textsf{PGM++} addresses the costly internal index traversal through a hybrid search strategy, generally delivering better or, in some cases, comparable lookup efficiency, achieving speedups of up to $\mathbf{1.56\times}$. 
An outlier case is on dataset \texttt{normal}, \textsf{RMI} significantly outperforms \textsf{PGM++} and \textsf{PGM}. 
The reason is that the optimized \textsf{RMI}, based on CDFShop~\cite{DBLP:conf/sigmod/MarcusZK20}, adopts non-linear models (with the best \textsf{RMI} uses cubic splines), which can fit normal keys very well (maximum error <4). 
However, on other datasets, especially complex real-world datasets, \textsf{RMI} fell short in fitting the data with constrained error limits, leading to costly last-mile search overhead as discussed in Section~\ref{sec:ineffective}. 

\noindent\textbf{Space-time Trade-off.}
In most cases, \textsf{PGM++} offers the best space-time trade-off. 
However, interestingly, unlike \textsf{RMI}, whose performance improves with increased index memory usage, \textsf{PGM++} exhibits an ``irregular'' pattern in its time-space relationship. 
This is because \textsf{PGM++} is specifically optimized for query efficiency at a given storage budget. 
Leveraging accurate cost models, our parameter tuner can find configurations to provide competitive query efficiency, even under limited space constraints. 
For example, on dataset \texttt{wiki}, \textsf{PGM++} uses just \textsf{0.1 MB} of memory to outperform an \textsf{RMI} with over \textsf{100 MB} space.

\noindent\textbf{Influence of Architecture.}
From Figure~\ref{fig:overall_evaluation}, the comparison results vary across different platforms. 
For dataset \texttt{osm}, compared to \textsf{PGM}, \textsf{PGM++} achieves an average speedup ratio of $\mathbf{1.78\times}$ on \texttt{x86-1} and \texttt{arm}. 
However, such a speedup decreases to $\mathbf{1.32\times}$ on platform \texttt{x86-2}. 
This is because the memory access latency on \texttt{x86-2} is much higher than that on \texttt{x86-1}, which reduces the improvement brought by adopting the hybrid search strategy.

\noindent\textbf{Effects of Workloads.}
We also evaluate an extreme query workload, \textsf{Zipfan}, though the results are not included in this paper due to space limits. 
Queries sampled from a Zipfan distribution exhibit a highly \emph{long-tail} pattern, where the first 1K elements are frequently accessed (Section~\ref{sec:benchmark_setting}). 
Under this workload, \textsf{PGM++}, \textsf{PGM}, and \textsf{RMI} all achieve lower query latencies by up to $\mathbf{1.77\times}$, $\mathbf{2.13\times}$, and $\mathbf{4.58\times}$, respectively, compared to their performance on \textsf{Uniform} workloads. 
\textsf{RMI} shows the most substantial gains, as the last-mile search cost dominates the total index lookup time ($\mathbf{>90\%}$).
This phase benefits greatly from the spatial locality inherent in the \textsf{Zipfan} workload, where frequently accessed memory is more likely to be cached.

\subsection{Cost Model and Parameter Tuner}\label{subsec:exp_cost_model}
\noindent\textbf{Space Cost Model.} 
According to Section~\ref{subsec:cost_model}, the leaf segment count ($L$) dominates the PGM-Index space cost. 
Here, we evaluate three different segment count estimators: (a) \textsf{SIMPLE}, which directly applies Theorem~\ref{theorem:segment_coverage} on the \emph{entire} gap distribution; 
(b) \textsf{CLIP}, which applies Theorem~\ref{theorem:segment_coverage} on the gaps excluding extreme values (<0.01-quantile or >0.99-quantile); 
and (c) \textsf{ADAP}, which partitions gaps into disjoint chunks and aggregates the segment count estimated for each chunk (as in Eq.~\eqref{eq:space_cost}). 

As shown in Figure~\ref{fig:space_est}, compared to the true segment count (\textsf{TRUE}), \textsf{ADAP} consistently achieves accurate estimations across all seven datasets, nearly overlapping the \textsf{TRUE} line. 
In addition, excluding uniformly distributed datasets (e.g., \texttt{books} and \texttt{uniform}), \textsf{SIMPLE} performs the worst, validating our discussion in Section~\ref{sec:benchmark_setting} that extreme gap values significantly affect estimation accuracy. 
Notably, \textsf{CLIP} also delivers accurate results on real datasets \texttt{fb}, \texttt{wiki}, and \texttt{osm}. 
This is because, on these datasets, the gaps are \emph{nearly} identically distributed after removing the extreme values, thus better satisfying the requirement of using Theorem~\ref{theorem:segment_coverage}.

\noindent\textbf{Time Cost Model.}
For each pair of $(\epsilon_i, \epsilon_\ell)\in\mathcal{E}\times\mathcal{E}$, where $\mathcal{E}=\{2^j\mid j=2,3,\cdots,10\}$, we estimate the index lookup overhead as $Cost(\epsilon_i,\epsilon_\ell)$ using the time cost model (i.e., Eq.~\eqref{eq:time_cost_total}--Eq.~\eqref{eq:pgm_height_est}), and then physically construct the corresponding $(\epsilon_i, \epsilon_\ell)$-PGM-Index to measure the actual lookup time (averaged over a given workload). 

Figure~\ref{fig:time_est} visualizes the relationship between the true index lookup overhead and the cost model's estimation. 
The closer the points in Figure~\ref{fig:time_est} are to the line $y=x$, the more accurate the estimation. 
From the results, our cost model closely approximates the true index lookup overhead, especially for the three synthetic datasets \texttt{uniform}, \texttt{normal}, and \texttt{lognormal}. 
This is because synthetic datasets strictly follow i.i.d.~gaps assumptions, leading to more precise estimates of the index height (Eq.~\eqref{eq:pgm_height_est}), which significantly affects the total time cost estimation (Eq.~\eqref{eq:internal_cost_overall}). 

\noindent\textbf{Parameter Tuning Strategy.} 
We finally evaluate PGM++'s parameter tuner as introduced in Section~\ref{subsec:cost_model}. 
For a given $\epsilon_\ell$, which is directly solved given a pre-specified storage budget (Eq.~\eqref{eq:eps_ell_opt}), we record the index lookup overhead for PGM-Indexes with different $\epsilon_i$ configurations: (a) $T_{\text{PGM++}}$, where $\epsilon_i$ is automatically tuned using our cost model, (b) $T_{\text{rand}}$, where $\epsilon_i$ is randomly selected, and (c) $T_{\text{opt}}$, which is the optimal time cost by testing all possible $\epsilon_i$ values. 

Figure~\ref{fig:auto_config} reports the \emph{relative} index lookup overhead~w.r.t.~different $\epsilon_\ell$ settings (i.e., $T_{\text{PGM++}}/T_{\text{opt}}-1$ and $T_{\text{rand}}/T_{\text{opt}}-1$). 
From the results, across all datasets and $\epsilon_\ell$ settings (i.e., storage budgets), PGM++'s automatic parameter tuning strategy consistently finds a better $\epsilon_i$ to reduce the index lookup overhead. 
Specifically, in \textbf{46\%} of cases, PGM++ successfully picks the \textbf{optimal} $\epsilon_i$, and in \textbf{91\%} of cases, PGM++ finds a configuration that is only $\mathbf{<10\%}$ worse than the optimal one in terms of actual index lookup overhead. 

\noindent\textbf{Takeaways.} 
The experimental results reveal that in over \textbf{90\%} of cases, PGM++'s parameter tuner identifies a \textbf{near-optimal} index configuration, introducing less than \textbf{10\%} extra index lookup overhead. 
In addition, our parameter tuner is much more efficient than CDFShop~\cite{DBLP:conf/sigmod/MarcusZK20} designed for optimizing RMI structures (requiring <\SI{1}{\micro\second} v.s.~>10 minutes). 
This is because instead of physically constructing the index, our method only depends on gap distribution characteristics, which can be pre-computed and re-used.

% \begin{figure}[t]
%      \centering
%      \begin{subfigure}[b]{0.17\textwidth}
%          \centering
%          \includegraphics[width=\textwidth]{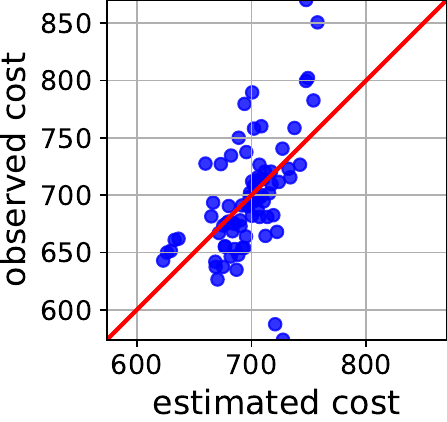}
%          \caption{Dataset \texttt{fb}.}
%          \label{fig:fb_cost_model}
%      \end{subfigure}
%      \begin{subfigure}[b]{0.17\textwidth}
%          \centering
%          \includegraphics[width=\textwidth]{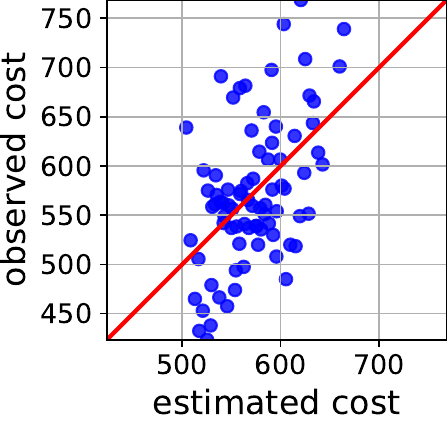}
%          \caption{Dataset \texttt{wiki}.}
%          \label{fig:wiki_cost_model}
%      \end{subfigure}

%      \begin{subfigure}[b]{0.17\textwidth}
%          \centering
%          \includegraphics[width=\textwidth]{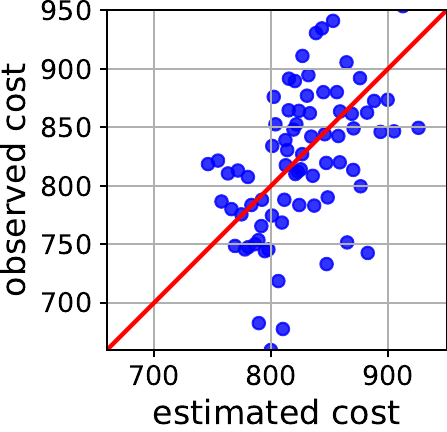}
%          \caption{Dataset \texttt{osm}.}
%          \label{fig:osm_cost_model}
%      \end{subfigure}
%      \begin{subfigure}[b]{0.17\textwidth}
%          \centering
%          \includegraphics[width=\textwidth]{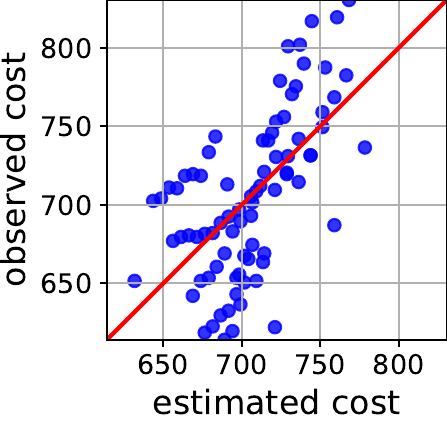}
%          \caption{Dataset \texttt{books}.}
%          \label{fig:books_cost_model}
%      \end{subfigure}
%      \vspace{1pt}
%         \caption{Latency w.r.t.~data size for linear search, standard binary search (\texttt{std::lower\_bound}), and branchless binary search on platforms \texttt{arm} and \texttt{x86-1}.}
%         \label{fig:bench_search_results}
% \end{figure}

\begin{figure*}
    \centering
    \includegraphics[width=0.98\textwidth]{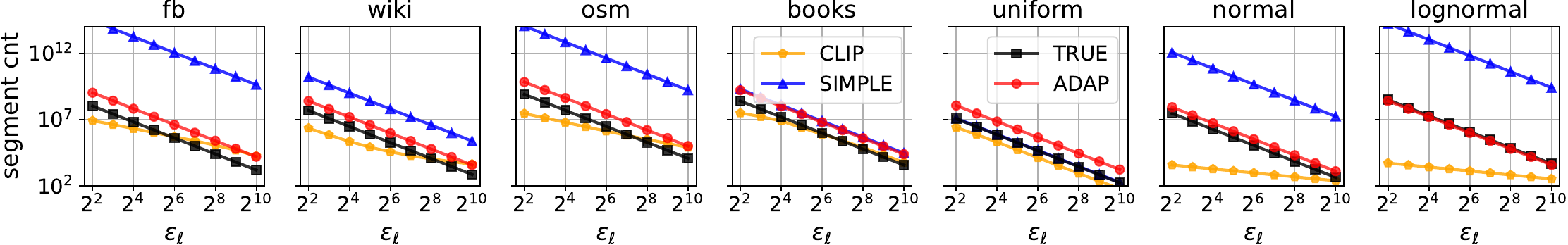}
    \caption{Evaluation of the space cost model (Eq.~\eqref{eq:space_cost}). For each $\epsilon_\ell$, we compare three leaf segment count estimators: (a) \textsf{SIMPLE}, (b) \textsf{CLIP}, and (c) \textsf{ADAP}. \textsf{TRUE} refers to the actual observed leaf segment count. }
    \label{fig:space_est}
    \vspace{-2ex}
\end{figure*}

\begin{figure*}
    \centering
    \includegraphics[width=0.98\textwidth]{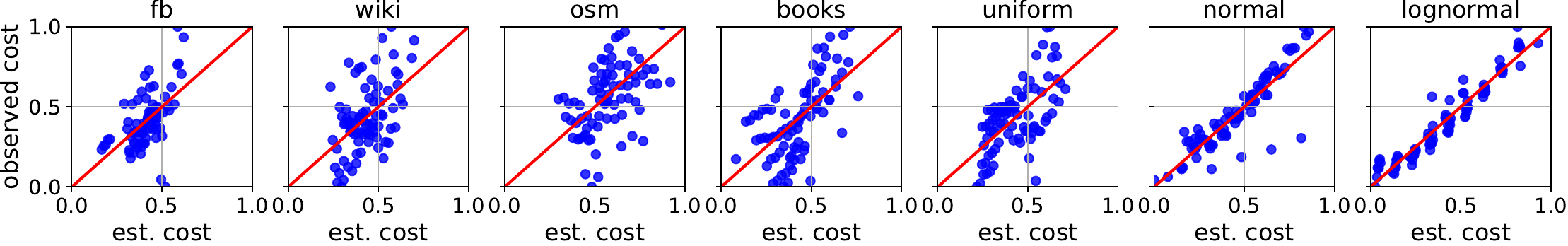}
    \caption{Evaluation of the time cost model (Eq.~\eqref{eq:time_cost_total}--Eq.~\eqref{eq:pgm_height_est}). We plot the true index lookup costs (normalized) against the estimated costs (normalized) on platform \texttt{x86-1}, where each point corresponds to a unique pair of $(\epsilon_i, \epsilon_\ell)$ configuration. }
    \label{fig:time_est}
    \vspace{-2ex}
\end{figure*}

\begin{figure*}
    \centering
    \includegraphics[width=0.98\textwidth]{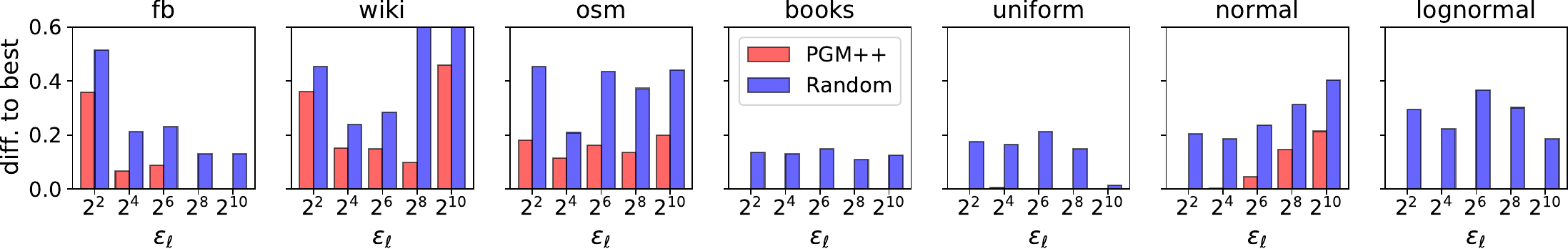}
    \caption{Evaluation of parameter tuning. The y-axis is the relative difference compared to the optimal configuration when fixing $\epsilon_\ell$. Red bars and blue bars refer to the $\epsilon_i$ settings selected by using PGM++'s cost model and randomly picking, respectively.}
    \label{fig:auto_config}
    \vspace{-2ex}
\end{figure*}

% \subsection{Compared Methods}

% \subsection{Pareto Analysis}

% \subsection{Cache Efficiency}

% \subsection{Construction Efficiency}

\section{Related Work}\label{sec:related_works}
\noindent\textbf{Learned Indexes.} 
Indexing one-dimensional sorted keys has been a well-explored topic for decades. 
While mainstream tree-based indexes (e.g., B+-tree~\cite{DBLP:journals/csur/Comer79}, FAST~\cite{DBLP:conf/sigmod/KimCSSNKLBD10}, ART~\cite{DBLP:conf/sigmod/BinnaZPSL18}, Wormhole~\cite{DBLP:conf/eurosys/WuNJ19}, HOT~\cite{DBLP:conf/sigmod/BinnaZPSL18}, etc.) are widely adopted in commercial DBMS, a new class of data structure, known as \emph{learned index}, has recently gained significant attention in both academia and industry~\cite{DBLP:conf/sigmod/KraskaBCDP18, DBLP:journals/pvldb/FerraginaV20,DBLP:conf/sigmod/DingMYWDLZCGKLK20,DBLP:journals/pvldb/WuZCCWX21,DBLP:journals/pvldb/ZhangG22,zhang2024making,DBLP:journals/pvldb/0006CJLXWLWZZWR22,DBLP:journals/pvldb/WuCYSKX22,DBLP:journals/tos/WangCWTW22,DBLP:conf/ppopp/TangWDHWWC20}.
Intuitively, learned indexes directly fit the CDF over sorted keys with controllable error to perform an error-bounded last-mile search. 
By properly organizing the model structure, learned indexes offer the potential for superior space-time trade-offs compared to conventional tree-based indexes~\cite{DBLP:journals/pvldb/MarcusKRSMK0K20,DBLP:journals/pvldb/WongkhamLLZLW22}. 

Existing learned indexes can be roughly categorized as either RMI-like~\cite{DBLP:conf/sigmod/KraskaBCDP18} or PGM-like~\cite{DBLP:journals/pvldb/FerraginaV20}, based on whether the error-bounded search occurs during the index traversal phase. 
This work delves deeply into the theoretical and empirical aspects of the PGM-Index, highlighting its potential to be \emph{practically} embedded into real DBMS. 

\noindent\textbf{Learned Index Theories.} 
Unlike tree-based indexes, which are supported by well-established theoretical foundations, the effectiveness of learned indexes has largely been demonstrated through \emph{empirical results}. 
Ferragina et~al.~\cite{DBLP:journals/pvldb/FerraginaV20,DBLP:conf/icml/FerraginaLV20} first prove that the expected time and space complexities of a PGM-Index with error constraint $\epsilon$ on $N$ keys should be $O(\log N)$ and $O(N/\epsilon^2)$, respectively. 
In parallel, another recent work~\cite{DBLP:conf/icml/ZeighamiS23} focuses on an RMI variant with \emph{piece-wise constant} models, achieving an index lookup time of $O(\log\log N)$ but using \emph{super-linear} space of $O(N\log N)$. 

In this work, we tighten the results of~\cite{DBLP:conf/icml/FerraginaLV20} by achieving a sub-logarithmic time complexity of $O(\log\log N)$ while maintaining \emph{linear} space, $O(N/\epsilon^2)$, for PGM-Indexes. 
To the best of our knowledge, this is the tightest bound among all existing learned indexes. 

\noindent\textbf{Learned Index Cost Model.} 
Modeling the space and time overheads of an index structure is crucial for both index parameter configuration and DBMS query optimization. 
Existing learned indexes mainly adopt a workload-based cost model, which assumes prior knowledge of the query distribution~\cite{DBLP:journals/pvldb/ZhangG22, DBLP:conf/sigmod/MarcusZK20}. 
In contrast, by extending the theoretical results, we establish a cost model for PGM-like indexes without \emph{any} assumptions on query workloads. 
As our cost model is simple, parameter tuning based on it is much more efficient than workload-driven approaches, making it more feasible to be integrated into practical DBMS. 
 
\noindent\textbf{AI4DB.} 
Beyond learned indexing, recent advancements in AI are reshaping traditional approaches on decades-old data management challenges, such as query planning~\cite{DBLP:journals/pvldb/ZhuCDCPWZ23,DBLP:conf/sigmod/MarcusNMTAK21,DBLP:journals/pvldb/YuC0L22}, cardinality estimation~\cite{DBLP:conf/cidr/KipfKRLBK19,DBLP:journals/pvldb/WangQWWZ21}, approximate query processing~\cite{DBLP:conf/cidr/MaSAKT21,DBLP:conf/icde/Thirumuruganathan20}, SQL generation~\cite{DBLP:conf/sigmod/WeirUGCIRBGHEcB20,DBLP:journals/pvldb/KimSHL20}, DBMS configuration~\cite{DBLP:conf/sigmod/AkenPGZ17,DBLP:conf/sigmod/ZhangLZLXCXWCLR19}, etc.

\vspace{-1ex}
\section{Conclusion and Future Work}\label{sec:conclusion}
This work provides a thorough theoretical and experimental revisit to the PGM-Index. 
We establish a new bound for the PGM-Index by showing the $O(\log\log N)$ index lookup time while using $O(N/G)$ space. 
We further identify that costly internal error-bounded search operations have become a bottleneck in practice. 
Based on such findings, we propose PGM++, a simple yet effective PGM-Index variant, by improving the internal search subroutine and configuring index hyper-parameters based on accurate cost models. 
Extensive experimental results demonstrate that PGM++ speeds up index lookup queries by up to $\mathbf{2.31\times}$ and $\mathbf{1.56\times}$ compared to the original PGM-Index and the optimized RMI implementation, respectively.

\noindent\textbf{Future Work.}  
\ding{182} Our theoretical results inherit the i.i.d.~assumption on \emph{gaps} from previous analyses. 
In our future work, we aim to relax this assumption to demonstrate that the sub-logarithmic bound still holds for weakly correlated data. 
\ding{183} To further accelerate PGM++, we plan to fully exploit architecture-aware optimizations like memory pre-fetching and SIMD. 
Additionally, we will release a GPU-accelerated version of PGM++. 

% \begin{acks}
%  This work was supported by the [...] Research Fund of [...] (Number [...]). Additional funding was provided by [...] and [...]. We also thank [...] for contributing [...].
% \end{acks}

%\clearpage

\bibliographystyle{ACM-Reference-Format}
\bibliography{main}

\end{document}